\documentclass{lmcs}
\pdfoutput=1

\usepackage[utf8]{inputenc}
% LMCS Layouting Macros
\usepackage{lastpage}
\lmcsdoi{17}{4}{14}
\lmcsheading{}{\pageref{LastPage}}{}{}%
{Jan.~08,~2020}{Nov.~29,~2021}{}

\usepackage{amsmath,amsthm,amsfonts,amssymb,amsbsy,pstricks,bbm,mathrsfs,multirow,bigstrut,graphicx,fixmath,url}
\usepackage[english]{babel} 
\usepackage{proof} 
\usepackage{stmaryrd}
\usepackage{xcolor}
\usepackage{todonotes}
\usepackage{lipsum}

\newcommand\blfootnote[1]{%
\begingroup
\renewcommand\thefootnote{}\footnote{#1}%
\addtocounter{footnote}{-1}%
\endgroup
}

\newcommand{\concretebags}{\mathcal{B}}
\newcommand{\automaton}{\mathcal{A}}
\newcommand{\powerset}{\mathcal{P}}

\newcommand{\isomorphism}{\mu}
\newcommand{\vertexiso}{\dot{\mu}}
\newcommand{\edgeiso}{\overline{\mu}}
\newcommand{\agraph}{G}
\newcommand{\edgeendpoints}{\mathit{endpts}}
\newcommand{\states}{Q}
\newcommand{\finalstates}{F}

\newcommand{\nodes}{\mathit{nodes}}
\newcommand{\arcs}{\mathit{arcs}}

\newcommand{\morphism}{\mu}

\newcommand{\subdecompositions}{\mathrm{Sub}}

\newcommand{\treewidth}{\mathrm{tw}}

%\newtheorem{theorem}{Theorem}
%\newtheorem{definition}[theorem]{Definition}
%\newtheorem{lemma}[theorem]{Lemma}
%\newtheorem{proposition}[theorem]{Proposition}
%\newtheorem{observation}[thm]{Observation}

% OBS: This has to be placed after the newtheorems
\newtheoremstyle{named}{}{}{\itshape}{}{\bfseries}{.}{.5em}{Restatement of #1 \thmnote{#3}}
\theoremstyle{named}
\newtheorem*{retheorem}{Theorem}

\newcommand{\emptystring}{{\lambda}}
\newcommand{\positions}{\mathit{Pos}}
\newcommand{\myterms}{\mathit{Ter}}

\newcommand{\boldT}{\mathbf{T}}

\newcommand{\msotwo}{\mbox{CMSO}}
\newcommand{\cmso}{\mbox{CMSO}}

\newcommand{\treeAutomaton}{{\mathcal{A}}}
\newcommand{\lang}{{\mathcal{L}}} 
\newcommand{\N}{{\mathbb{N}}}

\newcommand{\projection}{\mathbold{\pi}}
\newcommand{\astate}{{\mathfrak{q}}}

\newcommand{\graphparameter}{\mathbf{p}}

\newcommand{\composedT}{%
  \mathrel{\vbox{\offinterlineskip\ialign{%
    \hfil##\hfil\cr
    $\scriptscriptstyle\circ$\cr
    \noalign{\kern0.1ex}
    $\boldT$\cr
}}}}

\newcommand{\composedTprime}{%
  \mathrel{\vbox{\offinterlineskip\ialign{%
    \hfil##\hfil\cr
    $\scriptscriptstyle\circ$\cr
    \noalign{\kern0.1ex}
    $\boldT'$\cr
}}}}

\newcommand{\graph}{{\mathcal{G}}}
\newcommand{\alphabet}{\Sigma}

\newcommand{\transitionsautomaton}{\Delta}
\newcommand{\algorithm}{\mathfrak{A}}

\newcommand{\maximumdegree}{\Delta}
\newcommand{\treewidthinput}{t}

\newcommand{\incidence}{\mathrm{Inc}}

\keywords{CMSO Logic, Algorithmic Metatheorems, Graph Completion, Bidimensionality}

\begin{document}

\title{On Supergraphs Satisfying $\cmso$ 
Properties}

\author[Mateus de Oliveira Oliveira]{Mateus de Oliveira Oliveira}
\address{Department of Informatics - University of Bergen, Bergen, Norway}
\email{mateus.oliveira@uib.no}

\begin{abstract}
Let $\cmso$ denote the counting monadic second-order logic of graphs.
We give a constructive proof that for some computable function $f$, 
there is an algorithm $\algorithm$ that takes as input a $\cmso$ sentence $\varphi$, 
a positive integer $t$, and a connected graph $\agraph$ of maximum degree at most $\Delta$,
and determines, in time 
$f(|\varphi|,\treewidthinput)\cdot 2^{O(\maximumdegree\cdot \treewidthinput)}\cdot |\agraph|^{O(\treewidthinput)}$, 
whether $\agraph$ has a supergraph $\agraph'$ of treewidth at most $t$ such that $\agraph'\models \varphi$.

The algorithmic metatheorem described above sheds new light on certain unresolved 
questions within the framework of graph completion algorithms. In particular, using this metatheorem, 
we provide an explicit algorithm that determines, in time 
$f(d)\cdot 2^{O(\maximumdegree \cdot d)}\cdot |\agraph|^{O(d)}$, whether a connected graph of maximum degree $\Delta$
has a planar supergraph of diameter at most $d$. Additionally, we show that for each fixed $k$, 
the problem of determining whether $\agraph$ has an $k$-outerplanar supergraph of diameter at most $d$ 
is strongly uniformly fixed parameter tractable with respect to the parameter $d$. 

This result can be generalized in two directions. First, the diameter parameter can be 
replaced by any contraction-closed effectively CMSO-definable parameter $\graphparameter$.
Examples of such parameters are vertex-cover number, dominating number, and many other contraction-bidimensional 
parameters. In the second direction, the planarity requirement can be relaxed to bounded genus, 
and more generally, to bounded local treewidth. 
\end{abstract}

\maketitle

\section{Introduction}
\label{section:Introduction}
\blfootnote{A preliminary version of this work appeared at
the 26th EACSL Annual Conference on Computer Science Logic, CSL 2017 \cite{deOliveiraOliveira2017Supergraphs}.}
A parameterized problem $\lang \subseteq \alphabet^*\times \N$ is said to be 
{\em fixed parameter tractable} (FPT) if there exists a function $f:\N\rightarrow \N$ such that 
for each $(x,k)\in \alphabet^*\times \N$, one can decide whether $(x,k)\in \lang$ in time 
$f(k)\cdot |x|^{O(1)}$, where $|x|$ is the size of $x$ \cite{DowneyFellows1999Book}. 
Using non-constructive methods derived from Robertson and Seymour's graph minor theory, 
one can show that certain problems can be solved in time $f(k)\cdot |x|^{O(1)}$ for some function 
$f:\N\rightarrow \N$. The caveat is that the function $f$ arising from these non-constructive methods
is often {\em not known to be computable}. Interestingly, for some problems it is not even clear how to obtain
algorithms running in time $f_1(k)\cdot |x|^{f_2(k)}$ for some {\em computable} functions $f_1$ and $f_2$. 
In this work we will use techniques from automata theory and structural graph theory
to provide constructive FPT and XP algorithms for problems for which only non-constructive parameterized algorithms were known.

The counting monadic second-order logic of graphs $(\msotwo)$ extends first order logic by allowing 
quantifications over sets of vertices and sets of edges, and by introducing the notion of modular counting predicates. 
This logic is expressive enough to define several interesting graph properties, such as 
Hamiltonicity, $3$-colorability, connectivity, planarity, fixed genus, minor embeddability, etc.
Additionally, when restricted to graphs of constant treewidth, CMSO logic is able to define precisely 
those properties that are recognizable by finite state tree-automata operating on encodings of tree-decompositions, 
or equivalently, those properties that can be described by equivalence relations with  finite index
\cite{Courcelle1990MSO,Abrahamson1993finite,PilipczukBojanczyk2016,PilipczukBojanczyk2017}.

The expressiveness of CMSO logic has had a great impact in algorithmic theory due to Courcelle's 
model-checking theorem \cite{Courcelle1990MSO}. This theorem states that for some computable function 
$f:\N^{2}\rightarrow \N$, one can determine in time\footnote{$|\agraph|$ denotes the number of vertices in $\agraph$, and 
$|\varphi|$, the number of symbols in $\varphi$.} $f(|\varphi|,t)\cdot |\agraph|$ whether a given 
graph $\agraph$ of treewidth at most $t$ satisfies a given $\cmso$ sentence $\varphi$. As a consequence 
of Courcelle's theorem, many combinatorial problems, such as Hamiltonicity or $3$-colorability, which are NP-hard on general graphs, 
can be solved in linear time on graphs of constant treewidth. 
In this work we will consider a class problems on graphs of constant treewidth which cannot be 
directly addressed via Courcelle's theorem, either because it is not clear how to formulate 
the set of positive instances of such a problem as a $\cmso$-definable set, or because although the 
set of positive instances is $\cmso$-definable, it is not clear how to explicitly construct a 
$\cmso$ sentence $\varphi$ defining such set. For instance, 
sets of graphs that are closed under minors very often fall in the second category due to Robertson and Seymour's graph 
minor theorem.

\subsection{Main Result}

Let $\varphi$ be a $\cmso$ sentence, and $t$ be a positive integer. We say that 
a graph $G'$ is a $(\varphi,t)$-supergraph of a graph $G$ if the following conditions are 
satisfied: $G'$ satisfies $\varphi$, $G'$ has treewidth at most $t$, and $G'$ is a supergraph 
of $G$ (possibly containing more vertices than $G$). 

In our main result, Theorem \ref{theorem:MainTheoremGraphCompletion}, 
we devise an algorithm $\algorithm$ that takes as input a $\cmso$ sentence $\varphi$, 
a positive integer $t$, and a connected graph $\agraph$ of maximum degree $\maximumdegree$, and determines in time 
$f(|\varphi|,t)\cdot 2^{O(\maximumdegree\cdot t)}\cdot |\agraph|^{O(t)}$ whether $\agraph$ has a $(\varphi,t)$-supergraph.
We note that our algorithm determines the existence of such a $(\varphi,t)$-supergraph $\agraph'$ without the 
need of necessarily constructing $\agraph'$. Therefore, no bound on the size of a candidate supergraph $\agraph'$ is 
imposed. Note that a priori even the fact that the problem is decidable is not clear.  

In the next three sub-sections we show how Theorem \ref{theorem:MainTheoremGraphCompletion} can be used to 
provide partial solutions to certain long-standing open problems in parameterized complexity theory.

\subsection{Planar Diameter Improvement}
\label{subsection:PDI}

In the {\sc planar diameter improvement} problem (PDI), we are given a graph $\agraph$, and 
a positive integer $d$, and the goal is to determine whether $G$ has a planar supergraph $G'$ of 
diameter at most $d$. Note that the set of YES instances for the PDI problem is closed under minors. 
In other words, if $\agraph$ has a planar supergraph of diameter at most $d$, then any minor 
$H$ of $\agraph$ also has such a supergraph. Therefore, using non-constructive arguments from 
Robertson and Seymour's graph minor theory \cite{RobertsonSeymour1995GraphMinorsXIII,RobertsonSeymour2004GraphMinorsXX}
in conjunction with the fact planar graphs of constant diameter have constant treewidth, one can show
that for each fixed $d$, there exists an algorithm $\algorithm_d$ which determines in linear 
time whether a given $\agraph$ has diameter at most $d$. The problem is that the non-constructive techniques mentioned 
above provide us with no clue about what the algorithm $\algorithm_d$ actually is. This problem can be partially
remedied using a technique called effectivization by self-reduction introduced by Fellows and 
Langston \cite{FellowsLangston1989,DowneyFellows1999Book}.
Using this technique one can show that for some function $f:\N\rightarrow\N$, there exists a single algorithm $\algorithm$
which takes a graph $\agraph$ and a positive integer $d$ as input, and determines in time $f(d)\cdot |\agraph|^{O(1)}$ 
whether $\agraph$ has a planar supergraph of diameter at most $d$. 
The caveat is that the function $f:\N\rightarrow \N$ bounding the influence of the parameter $d$ in the 
running time of the algorithm mentioned above is not known to be computable.

Obtaining a fixed parameter tractability result for the PDI problem with a {\em computable} function $f$ 
is a notorious and long-standing open problem in parameterized complexity theory 
\cite{DowneyFellows1999Book,FeasibleMathematics1995,CyganFominKowalik2015ParameterizedAlgorithmsBook}.
Indeed, when it comes to explicit algorithms, the status of the PDI problem is much more elusive. As 
remarked in \cite{CohenGoncalvesKimPaulSauThilikosWeller15}, even the problem of determining whether PDI 
can be solved in time $f_1(d)\cdot |G|^{f_2(d)}$ for {\em computable} functions $f_1,f_2:\N\rightarrow \N$ is open.
 
Using Theorem \ref{theorem:MainTheoremGraphCompletion} we provide an explicit algorithm that solves the 
PDI problem for connected graphs in time ${f(d)\cdot 2^{O(\maximumdegree\cdot d)}\cdot |\agraph|^{O(d)}}$ 
where $f:\N\rightarrow \N$ is a {\em computable} function, and $\maximumdegree$ is the maximum degree of $\agraph$.
This result settles an open problem stated in \cite{CohenGoncalvesKimPaulSauThilikosWeller15} 
in the case in which the input graph is connected and has bounded (even logarithmic) degree. We note that our 
algorithm imposes no restriction on the degree of a prospective supergraph $\agraph'$. 

\subsection{$k$-Outerplanar Diameter Improvement}
\label{subsection:OPDI}

A graph is $1$-outerplanar if it can be embedded in the plane in such a way that all vertices lie in the 
outer-face of the embedding. A graph is $k$-outerplanar if it can be embedded in the plane in such a way 
that that deleting all vertices in the outer-face of the embedding yields a $(k-1)$-outerplanar graph. 
The $k$-outerplanar diameter improvement problem ($k$-OPDI)
is the straightforward variant of PDI in which the completion is required to be $k$-outerplanar instead of planar.
In \cite{CohenGoncalvesKimPaulSauThilikosWeller15} Cohen at al. devised an explicit polynomial time algorithm for the 
$1$-OPDI problem. The complexity of the $k$-outerplanar diameter improvement problem was left open for 
$k\geq 2$. Using Theorem \ref{theorem:MainTheoremGraphCompletion} we show that the $k$-OPDI problem can 
be solved in time $f(k,d)\cdot 2^{O(\Delta\cdot k)}\cdot |G|^{O(k)}$ where $f:\N\times\N\rightarrow \N$ 
is a computable function. In other words, for each fixed $k$, the $k$-outerplanar diameter improvement 
problem is strongly uniformly fixed parameter tractable with respect to the diameter parameter $d$ for 
bounded degree connected input graphs.

\subsection{Contraction-Closed Parameters}

A graph parameter is a function $\graphparameter$ that associates a non-negative integer with each 
graph. We say that such a parameter is contraction-closed if $\graphparameter(G)\leq \graphparameter(G')$
whenever $G$ is a contraction of $G'$. For instance, the diameter of a graph is clearly a contraction-closed 
parameter. We say that a graph parameter $\graphparameter$ is effectively CMSO-definable if there exists a computable 
function $\alpha$, and an algorithm that takes a positive integer $k$ as input and constructs a CMSO 
formula $\varphi_k$ that is true on a graph $\agraph$ if and only if $\graphparameter(\agraph)\leq k$. 

The results described in the previous subsections can be generalized in two directions. First, the diameter 
parameter can be replaced by any effectively CMSO-definable contraction closed parameter that is unbounded 
on Gamma graphs. These graphs were defined in \cite{FominGolovachThilikos2011Contraction} with the goal to provide 
a simplified exposition of the theory of contraction-bidimensionality. In particular, many well studied parameters that 
arise often in bidimensionality theory satisfy the conditions listed above. Examples of such parameters are the 
sizes of a minimum vertex cover, feedback vertex set,  maximal matching, dominating set, edge dominating set, 
connected dominating set etc. On the other direction, the planarity requirement can be relaxed to CMSO definable 
graph properties that exclude some apex graph as a minor. These properties are also known in the literature 
as bounded local-treewidth properties. For instance, embeddability on surfaces of genus $g$, for fixed $g$,
is one of such properties.

\subsection{Related Work}

As mentioned above, given a CMSO sentence $\psi$ and a positive integer $t$, one can use Courcelle's model 
checking theorem to determine in time $f(|\psi|,t)\cdot |G|^{O(1)}$ whether a given graph $G$ of treewidth
at most $t$ satisfies $\psi$. Therefore, given a CMSO sentences $\varphi$ and a positive integer $t$,
we may consider the following algorithmic approach to decide whether a given graph $G$ has a $(\varphi,t)$-supergraph: 
first, we construct a formula $\psi_{\varphi,t}$ which is true on a graph $G$ if there is a model $G'$ of $\varphi$ of 
treewidth at most $t$ such that $G$ is a subgraph of $G'$. In other words, $\psi_{\varphi,t}$ defines the subgraph
closure of the set of models of $\varphi$ of treewidth at most $t$. Then, to determine whether $G$ has a $(\varphi,t)$-supergraph,
it is enough to determine whether $G$ satisfies $\psi_{\varphi,t}$ using Courcelle's model checking theorem.

Unfortunately, this approach cannot work in general. The problem is that there exist CMSO definable families of 
graphs whose subgraph closure is not CMSO definable. For instance, let $\mathcal{L} = \{L_n\}_{n\in \N}$ be 
the family of ladder graphs, where
$L_n$ is the ladder with $n$ steps\footnote{The vertices of $L_n$ are $a_1,...,a_n$ and $b_1,...,b_n$, and the edges 
are $\{a_i,b_i\}$ for $i\in [n]$, $\{a_i,a_{i+1}\}$ for $i\in [n-1]$, and $\{b_i,b_{i+1}\}$ for $i\in [n-1]$.}.
It is easy to see that $\mathcal{L}$ is CMSO definable and every graph in $\mathcal{L}$
has treewidth at most $2$. Nevertheless, the subgraph closure of $\mathcal{L}$ does not have finite index. 
Therefore, this subgraph closure is {\em not} CMSO definable, since CMSO definable classes of graphs of constant
treewidth have finite index. 

Interestingly, when the property defined by $\varphi$ is contraction closed, then the sentence $\psi_{\varphi,t}$ defines 
a minor-closed property $\mathcal{P}$ whose treewidth is bounded by $t$. Additionally, it follows from Robertson and 
Seymour graph minor theorem that each minor-closed property can be characterized by a finite set $\mathcal{M}$ 
of forbidden minors. Therefore, if we were able to enumerate the minors in $\mathcal{M}$ constructively, we would 
immediately obtain a constructive polynomial time algorithm for determining whether a given graph $G$ has a $(\varphi,t)$-supergraph.  
It is worth noting that Adler, Kreutzer and Grohe have shown that if a minor-free graph property $\mathcal{P}$ is 
MSO definable and has constant treewidth, then one can effectively enumerate the set of forbidden minors 
for $\mathcal{P}$ \cite{AdlerGroheKreutzer2008}. In particular, by giving the sentence  $\psi_{\varphi,t}$ as input 
to the  algorithm in \cite{AdlerGroheKreutzer2008} we would get a list of forbidden minors characterizing the 
set of graphs that have a $(\varphi,t)$-supergraph. Nevertheless, the problem with this approach is that it 
is not clear how the sentence $\psi_{\varphi,t}$ can be constructed from $\varphi$ and $t$.

In the {\em embedded planar diameter improvement problem} (EPDI), the input consists of a planar graph $G$ embedded in the
plane, and a positive integer $d$. The goal is to determine whether one can add edges to the faces of this embedding 
in such a way that the resulting graph has diameter at most $d$. The difference between this problem and 
the PDI problem mentioned above is that in the EPDI problem, an embedding is given at the input, and edges must be added 
in such a way that the embedding is preserved, while in the PDI problem, no embedding is given at the input.
Recently, it was shown in \cite{LokshtanovOliveiraSaurabh2018} that EPDI for $n$-vertex graphs can be solved 
in time $2^{d^{O(d)}}n^{O(d)}$, while the analogous embedded problem for $k$-outerplanar graphs can be solved in time $2^{d^{O(d)}}n^{O(k)}$.

It is worth noting that the algorithms in \cite{LokshtanovOliveiraSaurabh2018} heavily exploit the embedding of the 
input graph by viewing separators as {\em nooses} - simple closed curves in the plane that touch the graph only in the
vertices (see e.g. \cite{Bouchitte2003chordal}). Additionally, it is currently unknown both whether PDI can be reduced to EPDI 
in XP time and whether EPDI can be reduced in XP time to PDI. Therefore it is not clear if the algorithm for EPDI can be used to provide a strongly uniform 
XP algorithm for PDI on general graphs. It is also worth noting that no hardness results for either PDI or EPDI are known. Indeed, 
determining whether either of these problems is NP-hard is also a long-standing open problem.

\subsection{Proof Sketch And Organization of the Paper}
\label{proof:Techniques}

In Section \ref{section:Preliminaries} we state some preliminary definitions. In 
Section \ref{section:ConcreteTreeDecompositions} we define the notions of concrete bags, and 
concrete tree decompositions. Intuitively, a concrete tree-decomposition is an algebraic 
structure that represents a graph together with one of its tree decompositions.
Using such structures we are able to define infinite families of graphs via tree-automata 
that accept infinite sets of tree decompositions. In particular, Courcelle's theorem 
can be transposed to this setting. More precisely, there is a computable function $f$ such that 
for each $\cmso$ sentence $\varphi$ and each $t\in \N$, one can construct in time 
$f(|\varphi|,t)$ a tree automaton $\treeAutomaton(\varphi,t)$ which accept precisely those concrete tree decompositions 
of width at most $t$ that give rise to graphs satisfying $\varphi$ (Theorem \ref{theorem:AutomatonMSO}).

In Section \ref{section:Subdecompositions} we define the notion of sub-decomposition 
of a concrete tree decomposition. Intuitively, if a concrete tree decomposition $\boldT$ represents 
a graph $G$, then a sub-decomposition of $\boldT$ represents a sub-graph of $G$. 
We show that given a tree-automaton $\treeAutomaton$ accepting a set $\lang(\treeAutomaton)$ of 
concrete tree decompositions, one can construct a tree automaton $\subdecompositions(\treeAutomaton)$ 
which accepts precisely those sub-decompositions of concrete tree decompositions in $\lang(\treeAutomaton)$
(Theorem \ref{theorem:SubdecompositionLemma}).

In Section \ref{section:AllDecompositions}, we introduce the main technical tool of this work. 
More specifically, we show that for each connected graph $\agraph$ of maximum degree $\Delta$, 
one an construct in time $2^{O(\Delta \cdot t)}\cdot |G|^{O(t)}$ a tree-automaton $\treeAutomaton(G,t)$
whose language $\lang(\treeAutomaton(G,t))$ consists precisely of those concrete tree decompositions of 
width at most $t$ that give rise to $G$ (Theorem \ref{theorem:AllDecompositions}).   

In Section \ref{section:SupergraphsAndCompletions} we argue that the problem of determining whether $\agraph$
has a supergraph of treewidth at most $t$ satisfying $\varphi$ is equivalent to determining whether 
the intersection of $\lang(\treeAutomaton(\agraph,t+1))$ with $\lang(\subdecompositions(\treeAutomaton(\varphi,t+1)))$ 
is non-empty. By combining Theorems \ref{theorem:AutomatonMSO}, \ref{theorem:SubdecompositionLemma} 
and \ref{theorem:AllDecompositions}, 
we infer that this problem can be solved in time ${f(|\varphi|,t)\cdot 2^{O(\Delta \cdot t)}\cdot |G|^{O(t)}}$ 
(Theorem \ref{theorem:MainTheoremGraphCompletion}). Finally, in Section \ref{section:ContractionClosedParameters}, 
we apply Theorem \ref{theorem:MainTheoremGraphCompletion} to obtain explicit algorithms for 
several supergraph problems involving contraction-closed parameters.

\section{Preliminaries}
\label{section:Preliminaries}

For each $n\in \N$, we let $[n]=\{1,...,n\}$. We let $[0] = \emptyset$. For each finite set $U$, we let $\powerset(U)$
denote the set of subsets of $U$.  For each $r\in \N$ and each finite set $U$, we let 
$\powerset^{\leq}(U,r) = \{U'\subseteq U \mid |U'|\leq r\}$ be the set of subsets of $U$ of size at most 
$r$, and $\powerset^{=}(U,r) = \{U'\subseteq U \mid  |U'| = r\}$ be the set of subsets of $X$ of size
precisely $r$. If $A,A_1,...,A_k$ are sets, then we write 
$A = A_1\;\dot\cup\;A_2\;\dot\cup ... \dot\cup\;A_k$ to indicate that $A_i\cap A_j = \emptyset$ for $i\neq j$, 
and that $A$ is the disjoint union of $A_1,..,A_k$.

\subparagraph*{Graphs:} A {\em graph} is a triple $G = (V_G,E_G,\incidence_G)$ where $V_G$ is 
a set of vertices, $E_G$ is a set of edges, and $\incidence_G\subseteq E_G\times V_G$ is an 
incidence relation. For each $e\in E_G$ we let $\edgeendpoints(e) = \{v \mid \incidence_G(e,v)\}$ 
be the set of endpoints of $e$, and we assume that $|\edgeendpoints(e)|$ is either $0$ or $2$. 
We note that our graphs are allowed to have multiple edges, but no loops. 
We say that a graph $H$ is a subgraph of $G$ if $V_H\subseteq V_G$, $E_H\subseteq E_G$ and 
$\incidence_{H} = \incidence_G\cap E_H\times V_H$. Alternatively, we say that
$G$ is a supergraph of $H$. The degree of a vertex $v\in V_G$ is the number $d(v)$ of edges 
incident with $v$. We let $\maximumdegree(G)$ denote the maximum degree of a vertex of $G$.  

A {\em path} in a graph $G$ is a sequence $v_1e_1v_2...e_{n-1}v_{n}$ 
where $v_i\in V_G$ for $i\in [n]$, $e_i\in E_G$ for $i\in [n-1]$, $v_i\neq v_j$ for $i\neq j$, and 
$\{v_i,v_{i+1}\} = \edgeendpoints(e_i)$ for each $i\in [n-1]$. We say that $G$ is {\em connected} if for 
every two vertices $v,v'\in V_G$ there is a path whose first vertex is $v$ and whose last vertex is 
$v'$.

Let $G$ and $H$ be graphs. An isomorphism from $G$ to $H$ is a pair of bijections 
$\isomorphism = (\vertexiso:V_G\rightarrow V_H,\edgeiso:E_G\rightarrow E_H)$ such that for every $e\in E_G$
if $\edgeendpoints(e) = \{v,v'\}$ then $\edgeendpoints(\edgeiso(e)) = \{\vertexiso(v),\vertexiso(v')\}$. 
We say that $G$ and $H$ are isomorphic if there is an isomorphism from $G$ to $H$.

\subparagraph*{Treewidth:} A tree is an acyclic graph $T$ containing a unique connected component.
To avoid confusion we may call the vertices of a tree ``nodes'' and call their edges ``arcs''. 
We let $\nodes(T)$ denote the set of nodes of $T$ and $\arcs(T)$ denote its set of arcs. A {\em tree decomposition} 
of a graph $G$ is a pair $(T,\beta)$ where $T$ is a tree and $\beta:\nodes(T)\rightarrow \powerset(V_G)$
is a function that labels nodes of $T$ with subsets of vertices of $G$ in such a way that 
the following conditions are satisfied. 

\begin{enumerate}
\item $\bigcup_{u\in \nodes(T)} \beta(u) = V_G$ 
\item For every $e \in E_G$, there exists a node $u\in \nodes(T)$ such that $\edgeendpoints(e) \subseteq \beta(u)$ 
\item For every $v\in V_G$, the set $T_v = \{u\in \nodes(T)\mid v\in \beta(u)\}$, i.e., the set of nodes of $T$ 
	whose corresponding bags contain $v$, induces a connected subtree of $T$. 
\end{enumerate}

The {\em width} of a tree decomposition $(T,\beta)$ is defined as $\mathit{max}_{u\in \nodes(T)} |\beta(u)|-1$, 
that is, the maximum bag size minus one. The treewidth of a graph $G$, denoted by $\treewidth(G)$, is the minimum 
width of a tree decomposition of $G$.

\subparagraph*{$\cmso$ Logic:} 
The counting monadic second-order logic of graphs, here denoted by 
$\cmso$, extends first order logic by allowing quantifications 
over sets of vertices and edges, and by introducing the notion of modular counting predicates. 
More precisely, the syntax of $\cmso$ logic includes the logical connectives 
$\vee,\wedge,\neg,\Leftrightarrow,\Rightarrow$, variables for vertices, edges, sets of vertices 
and sets of edges, the quantifiers $\exists,\forall$ that can be applied to these variables, and the following atomic predicates:
\begin{enumerate}
	\item $x\in X$ where $x$ is a vertex variable and $X$ a vertex-set variable;
	\item $y\in Y$ where $y$ is an edge variable and $Y$ an edge-set variable;
	\item $\incidence(x,y)$ where $x$ is a vertex variable, $y$ is an edge variable, and the interpretation is 
			that the edge $x$ is incident with the edge $y$. 
	\item $\mathit{card}_{a,r}(Z)$ where $0\leq a < r$, $r\geq 2$, $Z$ is a vertex-set or edge-set variable,
		 and the interpretation is that $|Z| = a\;\; (\!\!\!\mod r)$;
	\item equality of variables representing vertices, edges, sets of vertices and sets of edges. 
\end{enumerate}

A $\cmso$ {\em sentence} is a $\cmso$ formula without free 
variables. If $\varphi$ is a $\cmso$ sentence, then we write 
$G\models \varphi$ to indicate that $G$ satisfies $\varphi$. 

\subparagraph*{Terms:} Let $\alphabet$ be a finite set. The set $\myterms(\alphabet)$ of terms 
over $\alphabet$ is inductively defined as follows. 
\begin{enumerate}
	\item If $a\in \alphabet$, then $a\in \myterms(\alphabet)$.
	\item If $a\in \alphabet$, and $t\in \myterms(\alphabet)$, then $a(t)\in \myterms(\alphabet)$. 
	\item If $a\in \alphabet$, and $t_1,t_2\in \myterms(\alphabet)$, then $a(t_1,t_2)\in \myterms(\alphabet)$.  
\end{enumerate}

Note that the alphabet $\alphabet$ is unranked and the symbols in $\alphabet$ may be regarded as 
function symbols or arity $0$, $1$ or $2$. The set of positions of a term $t=a(t_1,...,t_r)\in \myterms(\alphabet)$
is defined as follows.
$$\positions(t)=\{\emptystring\} \cup \bigcup_{ j \in \{1,...,r\}} \{ j p\;|\; p\in \positions(t_j)\}.$$

Note that $\positions(t)$ is a set of strings over the alphabet $\{1,2\}$ and that $\emptystring$ is the empty string. 
If $t=a$ for some $a\in \alphabet$, then $\positions(t) = \{\lambda\}$.
If $p,pj\in \positions(t)$ where $j\in \{1,2\}$, then we say that $pj$ is a {\em child} of $p$.
Alternatively, we say that $p$ is the {\em parent} of $pj$. 
We say that $p$ is a {\em leaf} if it has no children.  
We let $\tau(t)$ be the tree that has 
$\positions(t)$ as nodes and $\{ \{p,pj\} \mid j\in \{1,2\},\; p,pj\in \positions(t)\}$ as arcs.
We say that a subset $P\subseteq \positions(t)$ is {\em connected} if the sub-tree of $\tau(t)$ 
induced by $P$ is connected. If $P$ is connected, then we say that a position $p\in P$ is 
the {\rm root} of $P$ if the parent of $p$ does not belong to $P$. 

If $t=a(t_1,...,t_r)$ is a term in $\myterms(\alphabet)$ for $r\in \{0,1,2\}$, and $p\in \positions(t)$, 
then the symbol $t[p]$ at position $p$ is inductively defined as follows. 
If $p=\emptystring$, then $t[p]=a$. On the other hand, if $p=jp'$ where $j\in \{1,2\}$,
then $t[p] = t_j[p']$.

\subparagraph*{Tree Automata:}
Let $\alphabet$ be a finite set of symbols. 
A {\em bottom-up tree-automaton} over
$\alphabet$ is a tuple $\treeAutomaton=(Q,\alphabet,\finalstates,\transitionsautomaton)$
where $Q$ is a set of states, $\finalstates\subseteq Q$ a set of final states, and 
$\transitionsautomaton$ is a set of transitions of the form $(\astate_1,...,\astate_r,a,\astate)$ 
with $a\in \alphabet$, $0\leq r\leq 2$,  and $\astate_1,...,\astate_r,\astate \in Q$. 
The size of $\treeAutomaton$, which is defined as $|\treeAutomaton|=|Q|+|\transitionsautomaton|$, 
measures the number of states in $Q$ plus the number of transitions in $\transitionsautomaton$.
The set $\lang(\treeAutomaton,\astate,i)$ of all terms reaching a state $\astate\in Q$ in depth 
at most $i$ is inductively defined as follows.

\begin{equation*}
\label{equation:InductiveDefinition}
\begin{array}{l}
\lang(\treeAutomaton,\astate,1) = \{a\; |\; (a,\astate)\in \transitionsautomaton\} \\
\\
\lang(\treeAutomaton,\astate,i) = \lang(\treeAutomaton,\astate,i-1) \; \cup \; \\
\hspace{1.9cm} \{a(t_1,...,t_{r})\;|\; r\in \{1,2\}, \mbox{ and } 
\exists (\astate_1,...,\astate_{r},a,\astate) \in \transitionsautomaton,\; 
t_j\in \lang(\treeAutomaton,\astate_j,i-1)\}
\end{array}
\end{equation*}

We denote by $\lang(\treeAutomaton,\astate)$ the set of all terms reaching state $\astate$ in finite depth, 
and by $\lang(\treeAutomaton)$ the set of all terms reaching some final state in $\finalstates$. 

\begin{equation}
\lang(\treeAutomaton,\astate) =\bigcup_{i\in \N} \lang(\treeAutomaton,\astate,i)
\hspace{1.5cm}
\lang(\treeAutomaton) =\bigcup_{{\astate\in \finalstates}} \lang(\treeAutomaton,\astate)
\end{equation}

We say that the set $\lang(\treeAutomaton)$ is the language {\em accepted} by $\treeAutomaton$.

Let $\projection:\alphabet\rightarrow \alphabet'$ be a map between finite sets of symbols $\alphabet$ and $\alphabet'$. 
Such mapping can be homomorphically extended to a mapping $\projection:\myterms(\alphabet)\rightarrow \myterms(\alphabet')$ 
between terms by setting $\projection(t)[p]=\projection(t[p])$ for each position $p\in \positions(t)$. 
Additionally, $\projection$ can be further extended to a set of terms $\lang\subseteq \myterms(\alphabet)$ by setting 
${\projection(\lang) = \{\projection(t)\;|\;t\in \myterms(\alphabet)\}}$. Below we state some well 
known closure and decidability properties for tree automata. 

\begin{lem}[Properties of Tree Automata \cite{Tata2007}]
\label{lemma:PropertiesOfTreeAutomata}
\label{lemma:EmptynessIntersection}
\label{lemma:Projection}
Let $\alphabet$ and $\alphabet'$ be finite sets of symbols. 
Let $\treeAutomaton_1$ and $\treeAutomaton_2$ be tree automata over $\alphabet$, 
and $\projection:\alphabet\rightarrow\alphabet'$ be a mapping. 
\begin{enumerate}
	\item One can construct in time $O(|\treeAutomaton_1|\cdot |\treeAutomaton_2|)$ a 
tree automaton $\treeAutomaton_1\cap \treeAutomaton_2$ such that 
$\lang(\treeAutomaton_1\cap \treeAutomaton_2) = \lang(\treeAutomaton_1)\cap \lang(\treeAutomaton_2)$. 
	\item One can determine whether $\lang(\treeAutomaton_1) \neq \emptyset$ in time $O(|\treeAutomaton_1|)$.  
	\item One can construct in time $O(|\treeAutomaton_1|)$ a tree automaton $\projection(\treeAutomaton_1)$
		such that $\lang(\projection(\treeAutomaton_1)) = \projection(\lang(\treeAutomaton_1))$. 
\end{enumerate}
\end{lem}

\section{Concrete Tree Decompositions}
\label{section:ConcreteTreeDecompositions}

\newcommand{\bagset}{B}
\newcommand{\bagedge}{b}

A {\em  $t$-concrete bag } is a pair $(\bagset,\bagedge)$ where $\bagset\subseteq [t]$, and 
$\bagedge\subseteq \bagset$ with $\bagedge = \emptyset$ or $|\bagedge| =  2$. We note that $B$
is allowed to be empty. We let $\concretebags(t)$ be the set of all $t$-concrete bags. 
Note that $|\concretebags(t)| \leq t^{2}\cdot 2^{t}$. We regard the set $\concretebags(t)$ as a 
finite alphabet which will be used to construct terms representing tree decompositions of graphs. 

A {$t$-concrete tree decomposition} is a term $\boldT\in \myterms(\concretebags(t))$. We 
let $\boldT[p]= (\boldT[p].\bagset,\boldT[p].\bagedge)$ be the $t$-concrete bag at position $p$ 
of $\boldT$. For each $s\in [t]$, we say that a non-empty subset $P\subseteq \positions(\boldT)$ is 
{\em $s$-maximal} if the following conditions are satisfied.

\begin{enumerate}
	\item $P$ is connected in $\positions(\boldT)$. 
	\item $s\in \boldT[p].\bagset$ for every $p\in P$.
%	\item If $P'$ is a connected subset of $\positions(\boldT)$ and $s\in \boldT[p].\bagset$
%		for every $p\in P'$, then $P'\cap P\neq \emptyset$ implies that $P'\subseteq P$.
	\item If $P'$ is a connected subset of $\positions(\boldT)$ and $s\in \boldT[p].\bagset$
		for every $p\in P'$, then either $P\cap P'=\emptyset$ or $P'\subseteq P$. 
\end{enumerate}

Note that if $P$ and $P'$ are $s$-maximal then either $P=P'$, or $P\cap P' =\emptyset$.  
Additionally, for each $p\in \positions(\boldT)$, and each $s\in \boldT[p].\bagset$, 
there exists a unique subset $P\subseteq \positions(\boldT)$ such that $P$ is $s$-maximal 
and $p\in P$. We denote this unique set by $P(p,s)$. Intuitively, each such set $P(p,s)$ 
corresponds to a vertex in the graph represented by $\boldT$. Two disjoint $s$-maximal sets $P(p,s)$ and $P(p',s)$ correspond to two distinct vertices in the graph.

\begin{defi}
\label{definition:GraphConcreteDecomposition}
Let $\boldT\in \myterms(\concretebags(t))$. 
The graph $\graph(\boldT)$ associated with $\boldT$ is defined as follows. 
\begin{enumerate}
	\item\label{graphOne} 
$V_{\graph(\boldT)}  = \{v_{s,P}\;|\;s\in [t],\; P\subseteq \positions(\boldT),\; P\mbox{ is $s$-maximal}\}$. 
	\item\label{graphTwo} $E_{\graph(\boldT)} = \{e_p \;|\; p\in \positions(\boldT),\; \bagedge\neq \emptyset\}$.
	\item\label{graphThree} 
$\incidence_{\graph(\boldT)}  = \{(e_p,v_{s,P(p,s)})\;|\; e_p\in E_{\graph(\boldT)},\; s\in \boldT[p].\bagedge\}$.  
\end{enumerate}
\end{defi}

Intuitively, a $t$-concrete tree decomposition may be regarded as a way of representing 
a graph together with one of its tree decompositions. This idea is widespread in texts 
dealing with recognizable properties of graphs 
\cite{PilipczukBojanczyk2016,AdlerGroheKreutzer2008,CourcelleEngelfriet2012,Elberfeld2016,Flum2002query}.
Within this framework it is customary to define a bag of width $t$ as a graph with at most 
$t$ vertices together with a function that labels the vertices of these graphs with numbers 
from $\{1,...,t\}$. Our notion of $t$-concrete bag, on the other hand, may be regarded as 
a representation of a graph with at most $t$ vertices injectively labeled with numbers from 
$\{1,...t\}$ and at most {\em one} edge. Within this point of view, the representation used
here is a syntactic restriction of the former. On the other hand, any decomposition which uses 
bags with arbitrary graphs of size $t$ can be converted into a $t$-concrete decomposition, by
expanding each bag into a sequence of $t^2$ concrete bags. The following observation is immediate, using 
the fact that if a graph has treewidth $t$, then it has a rooted tree decomposition in which each 
node has at most two children \cite{Elberfeld2016}.

\begin{obs}
\label{observation:TreeDecompositionsConcreteTreeDecompositions}
A graph $G$ has treewidth $t$ if and only if there exists some $(t+1)$-concrete tree decomposition 
$\boldT\in \myterms(\concretebags(t+1))$ such that $\graph(\boldT)$ is isomorphic to $G$. 
\end{obs}

The next theorem (Theorem \ref{theorem:AutomatonMSO}) may be regarded as a variant of Courcelle's theorem \cite{CourcelleEngelfriet2012}.
For completeness, we include a proof of Theorem \ref{theorem:AutomatonMSO} in Appendix \ref{ProofTheoremAutomatonMSO}.

\begin{thm}[Variant of Courcelle's Theorem]
\label{theorem:AutomatonMSO}
There exists a computable function $f:\N\times \N \rightarrow \N$ such that for each $\cmso$ sentence
$\varphi$, and each $t\in \N$, one can construct in time $f(|\varphi|,t)$ a tree-automaton 
$\automaton(\varphi,t)$ accepting the following tree language. 
\begin{equation}
\lang(\treeAutomaton(\varphi,t))  = \{\boldT\in \myterms(\concretebags(t))\;|\; \graph(\boldT) \models \varphi\}.
\end{equation}
\end{thm}

\section{Sub-Decompositions}
\label{section:Subdecompositions}

In this section we introduce the notion of sub-decompositions of a $t$-concrete decomposition. 
Intuitively, if a $t$-concrete tree decomposition $\boldT$ represents a graph $G$ then 
sub-decompositions of $\boldT$ represent subgraphs of $G$. The main result of this section states 
that given a tree automaton $\treeAutomaton$ over $\concretebags(t)$, one can efficiently construct 
a tree automaton $\subdecompositions(\treeAutomaton)$ over $\concretebags(t)$ which accepts precisely 
the sub-decompositions of $t$-concrete tree decompositions in $\lang(\treeAutomaton)$. 

We say that a $t$-concrete bag $(\bagset,\bagedge)$ is a sub-bag of a $t$-concrete bag $(\bagset',\bagedge')$ if 
$\bagset \subseteq \bagset'$ and $\bagedge\subseteq \bagedge'$. 

\begin{defi}
We say that a $t$-concrete tree decomposition $\boldT \in \myterms(\concretebags(t))$ 
is a sub-decomposition of a $t$-concrete tree decomposition $\boldT'\in \myterms(\concretebags(t))$ 
if the following conditions are satisfied. 

\begin{enumerate}[align=left]
\label{enumerate:SubDecomposition}
	\item[{\bf S1.}]\label{SubdecompositionOne} $\positions(\boldT) = \positions(\boldT')$. 
	\item[{\bf S2.}]\label{SubdecompositionTwo} For each $p\in \positions(\boldT)$, $\boldT[p]$ is a sub-bag of $\boldT'[p]$.
	\item[{\bf S3.}]\label{SubdecompositionThree} For each $p,pj\in \positions(\boldT)$, and for each $s\in [t]$, 
			if $s\in \boldT'[p].\bagset$ and $s\in \boldT'[pj].\bagset$, then $s\notin \boldT[p].\bagset$ 
			if and only if $s\notin \boldT[pj].\bagset$.
\end{enumerate}
\end{defi}

The following theorem states that sub-decompositions of $\boldT'$ are in one to one correspondence with 
subgraphs of $\graph(\boldT')$. 

\begin{thm}
\label{theorem:SubdecompositionLemma}
Let $G$ and $G'$ be graphs and let $\boldT'\in \myterms(\concretebags(t))$ be a $t$-concrete tree decomposition 
such that $\graph(\boldT')= G'$. 
Then $G$ is a subgraph of $G'$ if and only if there exists some $\boldT\in \myterms(\concretebags(t))$
such that $\boldT$ is a sub-decomposition of $\boldT'$ with $\graph(\boldT) = G$. 
\end{thm}
\begin{proof}
$ $ 
\begin{enumerate}
\item Let $G$ be a subgraph of $\graph(\boldT')$. We show that there exists a sub-decomposition $\boldT$ of 
$\boldT'$ such that $\graph(\boldT) = G$. Since $G$ is a subgraph of $\graph(\boldT)$, we have that 
$V_G \subseteq V_{\graph(\boldT')}$, $E_G\subseteq E_{\graph(\boldT')}$, and 
$\incidence_G = \incidence_{\graph(\boldT')}\cap E_G\times V_G$. We define $\boldT$ by setting 
$\boldT[p]$ as follows for each $p\in \positions(\boldT)=\positions(\boldT')$. 
\begin{enumerate}
	\item $\boldT[p].\bagset = \boldT'[p].\bagset 
	\backslash \{s\;|\; v_{s,P(p,s)} \in V_{\graph(\boldT')}\backslash V_G\}$.
	\item $\boldT[p].\bagedge = \emptyset$ if $e_p\in E_{\graph(\boldT')}\backslash E_{G}$ and 
	       $\boldT[p].\bagedge = \boldT'[p].\bagedge$ otherwise. 
\end{enumerate}
First, we note that $v_{s,P}\in V_{\graph(\boldT)}$ if and only if $v_{s,P}\in V_G$, $e_p\in E_{\graph(\boldT)}$ if 
and only if $e_p\in E_{G}$, and $(e_p,v_{i,P})\in \incidence_{\graph(\boldT)}$ if and only if 
$(e_p,v_{i,P})\in V_G$. Therefore, $G = \graph(\boldT)$. 
We will check that the $t$-concrete decomposition $\boldT$ defined above is indeed a sub-decomposition of $\boldT'$.
In other words, we will verify that conditions {\bf S1}, {\bf S2} and {\bf S3} above are satisfied. The fact 
that {\bf S1} is satisfied is immediate, since we define $\boldT[p]$ for each $p\in \positions(\boldT')$. Therefore, 
$\positions(\boldT) = \positions(\boldT')$. Condition {\bf S2} is also satisfied, since by $(a)$ and $(b)$ we have 
that $\boldT[p].\bagset\subseteq \boldT'[p].\bagset$ and that $\boldT[p].\bagedge$ is either $\emptyset$, or equal to 
$\boldT'[p].\bagedge$. Finally, condition {\bf S3} is also satisfied, since (a) guarantees that for each 
number $s\in [t]$, and each $s$-maximal set $P\subseteq \positions(\boldT')$, if $s$ is 
removed from $\boldT'[p].\bagset$ for some $p\in P$, then $s$ is indeed removed from $\boldT'[p].\bagset$ for every $p\in P$.

\item For the converse, let $\boldT$ be a sub-decomposition of $\boldT'$. We show that the graph 
	$\graph(\boldT)$ is a subgraph of $\graph(\boldT')$. First, we note that condition {\bf S3} guarantees 
	that for each $s\in [t]$ and each $P\subseteq \positions(\boldT)$, if $P$ is $s$-maximal in $\boldT$ 
	then $P$ is $s$-maximal in $\boldT'$. Therefore, $V_{\graph(\boldT)}\subseteq V_{\graph(\boldT')}$.
	Now, Condition {\bf S2} guarantees that $e_p\in E_{\graph(\boldT)}$ implies that $e_p\in E_{\graph(\boldT')}$. 
	Therefore, $E_{\graph(\boldT)}\subseteq E_{\graph(\boldT')}$. Finally, by definition 
	$(e_p,v_{s,P})\in \incidence_{\graph(\boldT)}$ if and only if $s\in \boldT[p].\bagedge$ for each $p\in P$. Since the fact 
	that $s\in \boldT[p].\bagedge$ implies that $s\in \boldT'[p].\bagedge$, we have that $(e_p,v_{s,P})\in \incidence_{\graph(\boldT)}$
	implies that $(e_p,v_{s,P})\in \incidence_{\graph(\boldT')}$. Therefore, 
	$\incidence_{\graph(\boldT)} \subseteq \incidence_{\graph(\boldT')}$. Additionally, since 
	$(e_p,v_{s,P(s,p)})\in \incidence_{\graph(\boldT)}$ for each $e_p\in E_{\graph(\boldT)}$ and each $s\in \boldT[p].\bagedge$,
	we have that $\incidence_{\graph(\boldT)}  = \incidence_{\graph(\boldT')} \cap E_{\graph(\boldT)}\times V_{\graph(\boldT)}$.
	This shows that $\graph(\boldT)$ is a subgraph of $\graph(\boldT')$. 
  \qedhere
\end{enumerate}
\end{proof}

The following theorem states that given a tree automaton $\treeAutomaton$ over $\concretebags(t)$, one can efficiently 
construct a tree automaton $\subdecompositions(\treeAutomaton)$ which accepts precisely the 
sub-decompositions of $t$-concrete tree decompositions in $\lang(\treeAutomaton)$. 

\begin{thm}[Sub-Decomposition Automaton]
\label{theorem:SubDecompositions}
Let $\treeAutomaton$ be a tree automaton over $\concretebags(t)$. Then one can construct in time 
$2^{O(t)}\cdot |\treeAutomaton|$ a tree automaton $\subdecompositions(\treeAutomaton)$ over 
$\concretebags(t)$ accepting the following language. 
\begin{equation*}
\lang(\subdecompositions(\treeAutomaton)) = \{\boldT\in \myterms(\concretebags(t))\;|\;
\exists \boldT'\in \lang(\treeAutomaton) \mbox{ s.t. $\boldT$ is a sub-decomposition of $\boldT'$} 
\}.
\end{equation*}
\end{thm}
\begin{proof}
Let $\treeAutomaton = (\states,\concretebags(t),\transitionsautomaton,\finalstates)$ be a tree automaton over $\concretebags(t)$.
As a first step we create an intermediate tree automaton 
$\treeAutomaton' = (\states',\concretebags(t),\transitionsautomaton',\finalstates')$ which accepts the 
same language as $\treeAutomaton$. The tree automaton $\treeAutomaton'$ is defined as follows.   
\begin{equation*}
\begin{array}{c}
\states' =  \{\astate_{\bagset}\;|\;\astate \in \states,\;\bagset\subseteq [t]\} \hspace{1cm} 
\finalstates' = \{\astate_{\bagset}\;|\; \astate\in \finalstates,\;\bagset \subseteq [t]\}
\\
\\
\transitionsautomaton' = \{(\astate_{\bagset_1}^1,...,\astate_{\bagset_r}^r,(\bagset,\bagedge),\astate_{\bagset})\;|\; 
(\astate^1,...,\astate^{r},(\bagset,\bagedge),\astate) \in \transitionsautomaton,\;B_i\subseteq [t] \mbox{ for $i\in [r]$}\}.
\end{array}
\end{equation*}
Note that for each $\astate\in \states$, each $\bagset\subseteq [t]$, and each 
$\boldT\in \myterms(\concretebags(t))$, $\boldT$ reaches state 
$\astate_{\bagset}$ in $\treeAutomaton'$ if and only if $\boldT$ reaches state $\astate$ in $\treeAutomaton$ and 
$\boldT[\emptystring].\bagset = \bagset$, where $\boldT[\emptystring]$ is the $t$-concrete bag at the root of $\boldT$. 
In particular, this implies that a term $\boldT$ belongs 
to $\lang(\treeAutomaton')$ if and only if $\boldT \in \lang(\treeAutomaton)$. 

Now, consider the tree automaton $\subdecompositions(\automaton) = (\states'',\concretebags(t),\transitionsautomaton'',\finalstates'')$ 
over $\concretebags(t)$ where 
\begin{equation*}
\begin{array}{l}
\states'' = \{\astate_{\bagset,\bagset'}\;|\;\astate\in \states, \bagset\subseteq \bagset'\subseteq [t]\} \hspace{1cm}
\finalstates'' = \{\astate_{\bagset,\bagset'}\;|\;\astate\in \finalstates, \bagset\subseteq \bagset'\subseteq [t]\}
\\
\\
\transitionsautomaton'' = \{(\astate^1_{\bagset_1,\bagset_1'},...,\astate^r_{\bagset_r,\bagset_r'},
(\bagset,\bagedge),\astate_{\bagset,\bagset'})\;|\;
\exists (\astate^1_{\bagset_1'},...,\astate^r_{\bagset_r'},(\bagset',\bagedge'),\astate_{\bagset'})\in \transitionsautomaton' 
\mbox{ such that}\;\\
\hspace{4.7cm}\bagset_i\subseteq \bagset_i',\;\bagset\subseteq \bagset',\\
\hspace{4.7cm}(\bagset,\bagedge) \mbox{ is a sub-bag of } (\bagset',\bagedge') \\
\hspace{4.7cm}\mbox{for each $j\in [r]$, if $s\in B'\wedge s\in B_j'$ then $s\notin B\Leftrightarrow s\notin B_j$} \}.
\end{array}
\end{equation*}

It follows by induction on the height of terms that 
a term $\boldT \in \myterms(\concretebags(t))$ reaches a state 
$\astate_{\bagset,\bagset'}$ in $\subdecompositions(\treeAutomaton)$ 
if and only if there exists some term $\boldT'\in \myterms(\concretebags(t))$ such that 
$\boldT'$ reaches state $\astate_{\bagset'}$ in $\treeAutomaton'$, $\boldT[\lambda].\bagset = \bagset$, 
$\boldT'[\lambda].\bagset = \bagset'$, and $\boldT$ is a sub-decomposition of $\boldT'$. In particular, 
$\boldT$ reaches a final state of $\subdecompositions(\treeAutomaton)$ if and only if $\boldT$
is a sub-decomposition of some $\boldT'$ which reaches a final state of $\treeAutomaton'$. 
\end{proof}

\section{Representing All Tree Decompositions of a Given Graph}
\label{section:AllDecompositions}

\newcommand{\vertexextension}{\nu}
\newcommand{\neighbourextension}{\eta}
\newcommand{\concretemorphism}{\mu}
\newcommand{\edgeextension}{y}
\newcommand{\rootmarking}{\rho}
\newcommand{\concreteprojection}{\mathbold{\pi}}

In this section we show that given a connected graph $G$ of maximum degree $\maximumdegree$, and 
a positive integer $t$, one can construct in time $2^{O(\Delta\cdot t)}\cdot |V_G|^{O(t)}$ a 
tree automaton $\treeAutomaton(G,t)$ over $\concretebags(t)$ that accepts the set of all 
$t$-concrete tree decompositions of $G$ (of all shapes and sizes).

Let $G$ be a graph. A $(G,t)$-concrete bag is a tuple 
$(\bagset,\bagedge,\vertexextension,\neighbourextension,\edgeextension,\rootmarking)$ where $(\bagset,\bagedge)$ 
is a $t$-concrete bag; ${\vertexextension:\bagset\rightarrow V_G}$ is an injective function that assigns a vertex 
of $G$ to each element of $\bagset$; ${\neighbourextension:\bagset\rightarrow \powerset^{\leq}(E_G,\maximumdegree(G))}$
is a function that assigns to each element $s\in \bagset$, a set of edges incident with $\vertexextension(s)$ of size 
at most $\maximumdegree(G)$; $\edgeextension$ is a subset of $E_G$ such that $|\edgeextension|\leq 1$ and 
$\edgeextension\subseteq \neighbourextension(s)$ whenever $s\in \bagedge$; and $\rootmarking$ is a subset of $\bagset$. 

We let $\concretebags(G,t)$ be the set of all $(G,t)$-concrete bags. 
Note that $\concretebags(G,t)$ has at most $2^{O(\maximumdegree(G)\cdot t)}\cdot |V_G|^{O(t)}$ elements. 
We let $\myterms(\concretebags(G,t))$ be the set of all terms over $\concretebags(G,t)$. If $\hat{\boldT}$
is a term in $\concretebags(G,t)$ then for each $p\in \positions(\boldT)$, the $(G,t)$-concrete 
bag of $\hat{\boldT}$ at position $p$ is denoted by the tuple
$$(\hat{\boldT}[p].\bagset,\hat{\boldT}[p].\bagedge,\hat{\boldT}[p].\vertexextension,
\hat{\boldT}[p].\neighbourextension,\hat{\boldT}[p].\edgeextension,\hat{\boldT}[p].\rootmarking).$$. 

\begin{defi}
\label{definition:GTDecomposition}
We say that a term $\hat{\boldT}\in \myterms(\concretebags(G,t))$ is a $(G,t)$-concrete tree decomposition 
if the following conditions are satisfied for each each $p\in \positions(\hat{\boldT})$ and each 
$s\in [t]$.
\begin{enumerate}[align=left,itemsep=3pt]
	\item[{\bf C1.}] If $pj\in \positions(\hat{\boldT})$ and $s\in \hat{\boldT}[p].\bagset \cap \hat{\boldT}[pj].\bagset$ 
		then ${\hat{\boldT}[p].\vertexextension(s) = \hat{\boldT}[pj].\vertexextension(s)}$. 
	\item[{\bf C2.}] If $\hat{\boldT}[p].\bagedge = \{s,s'\}$ then $\hat{\boldT}[p].\edgeextension = \{e\}$ for some 
		edge $e$ with $$\edgeendpoints(e) = \{\hat{\boldT}[p].\vertexextension(s),\hat{\boldT}[p].\vertexextension(s')\}.$$ 
	\item[{\bf C3.}] Let $r\in \{0,1,2\}$, and $p1,...,pr$ be the children\footnote{If $r=0$ then $p$ has no child.} of $p$, 
		then $$\hat{\boldT}[p].\neighbourextension(s) = \hat{\boldT}[p].\edgeextension \;\dot\cup\; 
			\hat{\boldT}[p1].\neighbourextension(s)\; \dot\cup\; ... \;\dot\cup\;\hat{\boldT}[pr].\neighbourextension(s).$$
	\item[\bf{C4.}] If $s\in \hat{\boldT}[p].\rootmarking$ then 
		$\hat{\boldT}[p].\neighbourextension(s) = \{e\;|\; (e,\hat{\boldT}[p].\vertexextension(s))\in \incidence_G\}$. 
	\item[\bf{C5.}] If $p=\emptystring$ then $\hat{\boldT}[p].\rootmarking = \hat{\boldT}[p].\bagset$. 
		If $pj\in \positions(\hat{\boldT})$ then $s\in \hat{\boldT}[pj].\rootmarking$ if and only if 
		$s\in \hat{\boldT}[pj].\bagset$ and $s\notin \hat{\boldT}[p].\bagset$. 
\end{enumerate}
\end{defi}

Let $\concreteprojection:\concretebags(G,t)\rightarrow \concretebags(t)$ be a function such that 
$\concreteprojection(\bagset,\bagedge,\vertexextension,\neighbourextension,\edgeextension,\rootmarking) = (\bagset,\bagedge)$ 
for each {${(G,t)}$-concrete} bag $(\bagset,\bagedge,\vertexextension,\neighbourextension,\edgeextension,\rootmarking)\in \concretebags(G,t)$.
In other words, $\projection$ transforms a $(G,t)$-concrete bag into a $t$-concrete bag by 
erasing the four last coordinates of the former. If $\hat{\boldT}$ is a term in $\myterms(\concretebags(G,t))$ then we let 
$\concreteprojection(\hat{\boldT})$ be the term in $\myterms(\concretebags(t))$ which is obtained by 
setting $\concreteprojection(\hat{\boldT})[p] = \concreteprojection(\hat{\boldT}[p])$ for each position $p\in \positions(\hat{\boldT})$.

\begin{thm}
\label{theorem:AllConcreteDecompositions}
Let $G$ be a connected graph and let $\boldT\in \myterms(\concretebags(t))$. Then $\boldT$ is a $t$-concrete 
tree decomposition of $G$ if and only if $|V_{\graph(\boldT)}|=|V_G|$ and there exists a $(G,t)$-concrete tree decomposition 
$\hat{\boldT}\in \myterms(\concretebags(G,t))$ such that $\boldT = \concreteprojection(\hat{\boldT})$. 
\end{thm}
\begin{proof}
Assume that $G=\graph(\boldT)$. Then we have $|V_G| = |V_{\graph(\boldT)}|$. We will show how to construct a 
$(G,t)$-concrete tree decomposition $\hat{\boldT}\in \myterms(\concretebags(G,t))$ such that 
$\concreteprojection(\hat{\boldT})= \boldT$. Clearly, we must have 
$\positions(\hat{\boldT}) = \positions(\boldT)$, and for each $p\in \positions(\boldT)$, 
we must have $\hat{\boldT}[p].\bagset = \boldT[p].\bagset$ and 
$\hat{\boldT}[p].\bagedge = \boldT[p].\bagedge$. Additionally, the set $\boldT[p].\rho$ is 
completely determined by the sets $\boldT[p].\bagset$ and $\boldT[p'].\bagset$, where $p'$ is 
the parent of $p$. Therefore, 
it is enough to specify the functions $\hat{\boldT}[p].\vertexextension$, $\hat{\boldT}[p].\neighbourextension$, and 
the set $\hat{\boldT}[p].\edgeextension$ for each $p\in \positions(\boldT)$. 

Let $\beta:\positions(\boldT)\rightarrow \{ \emptyset, \{e_p\}\}$ 
	be a function such that for each $p\in \positions(\boldT)$, $\beta(p) = \emptyset$ if $\boldT[p].\bagedge = \emptyset$ and 
	$\beta(p) = \{e_p\}$ if $\boldT[p].\bagedge \neq \emptyset$. 

\begin{enumerate}
\item For each $p\in \positions(\boldT)$ and each $s\in \hat{\boldT}[p].\bagset$ we 
	set $\hat{\boldT}[p].\vertexextension(s) = v_{s,P(s,p)}$. 
\item For each $p\in \positions(\boldT)$ such that $\boldT[p].\bagedge \neq \emptyset$, we set 
$\hat{\boldT}[p].\edgeextension = \{e_p\}$.
\item For each $p\in \positions(\boldT)$, and each $s\in \hat{\boldT}[p].\bagset$, we 
set 
\begin{equation*}
\hat{\boldT}[p].\neighbourextension(s) = \left\{
\begin{array}{rcl}
\beta(p)\;\dot{\cup}\;\hat{\boldT}[p1].\neighbourextension(s) 
\;\dot{\cup}\; ... \;\dot{\cup}\; \hat{\boldT}[pr].\neighbourextension(s) & & 
\mbox{if $s\in \boldT[p].\bagedge$,} \\
\\
\hat{\boldT}[p1].\neighbourextension(s) 
\;\dot{\cup}\; ... \;\dot{\cup}\; \hat{\boldT}[pr].\neighbourextension(s) & & 
\mbox{otherwise.} \\
\end{array}
\right.
\end{equation*}
where $p1,...,pr$ are the children of $p$. 
\end{enumerate} 

Now one can check by induction on the height of positions that for each $s\in [t]$ and each 
position $p\in \positions(\hat{\boldT})$, the five conditions {\bf C1}-{\bf C5} of Definition 
\ref{definition:GTDecomposition} are satisfied. This implies that $\hat{\boldT}$ is a $(G,t)$-concrete tree decomposition.

For the converse, suppose that $\hat{\boldT}\in \myterms(\concretebags(G,t))$ is a $(G,t)$-concrete tree decomposition 
such that $\boldT = \concreteprojection(\hat{\boldT})$ and $|V_{\graph(\boldT)}| = |V_{G}|$. 
We will show that $\graph(\boldT)$ is isomorphic to $G$. Let $\vertexiso:V_{\graph(\boldT)}\rightarrow V_G$ 
and $\edgeiso:E_{\graph(\boldT)} \rightarrow E_G$ be functions that are defined as follows for each 
vertex $v_{s,P}\in V_{\graph(\boldT)}$ and each edge $e_p\in E_{\graph(\boldT)}$ respectively.  
 
\begin{equation}
\label{equation:IsoVertex}
\vertexiso(v_{s,P}) = \hat{\boldT}[p].\vertexextension(s) \mbox{ if $P=P(s,p)$ for some $p\in P$.} \\
\end{equation}
\begin{equation}
\label{equation:IsoEdge}
\edgeiso(e_p) = e \mbox{ if $\hat{\boldT}[p].\edgeextension  = \{e\}$}. 
\end{equation}

We claim that the pair $\isomorphism = (\vertexiso,\edgeiso)$ is an isomorphism from 
$\graph(\boldT)$ to $G$. First, let $(e_p,v_{s,P}) \in \incidence_{\graph(\boldT)}$. Then, 
by Condition \ref{graphThree} of Definition \ref{definition:GraphConcreteDecomposition}, 
we have that $s\in \boldT[p].\bagedge$. Therefore, by Equation \ref{equation:IsoEdge} and 
by Condition {\bf C2} of Definition \ref{definition:GTDecomposition}, $(\edgeiso(e_p),\hat{\boldT}[p].\vertexextension(s)) \in 
\incidence_G$. Since by Equation \ref{equation:IsoVertex},  $\vertexiso(v_{s,P}) = \hat{\boldT}[p].\vertexextension(s)$, 
we have that $(\edgeiso(e_p),\vertexiso(v_{s,P}))\in \incidence_{G}$. In other words, 
whenever $(e_p,v_{s,P})\in \incidence_{\graph(\boldT)}$, we have that $(\edgeiso(e_p),\vertexiso(v_{s,P}))\in \incidence_{G}$.
This shows that the pair $\isomorphism$ is a morphism from $\graph(\boldT)$ to $G$ in the sense that 
it preserves adjacencies. In order to show that 
$\isomorphism$ is indeed an isomorphism, we need to prove that the functions $\vertexiso$ and $\edgeiso$ are bijections.

Since by assumption we have that $|V_G|=|V_{\graph(\boldT)}|$, to show that 
$\vertexiso$ is a bijection, it is enough to show that for each vertex $v\in V_G$ there is some vertex 
$v_{s,P}\in V_{\graph(\boldT)}$ such that $\morphism(v_{s,P}) = v$. In other words, it is enough to show that 
$\vertexiso$ is surjective. This proceeds as follows. Let $v\in V_{G}$ and $v_{s,P}\in V_{\graph(\boldT)}$ 
be such that $\vertexiso(v_{s,P})= v$. By Equation \ref{equation:IsoVertex}, there is some $p\in P$ such that 
$\boldT[p].\vertexextension(s)=v$. Therefore, since $P=P(s,p)$, by Condition {\bf C1} of Definition 
\ref{definition:GTDecomposition}, $\boldT[p].\vertexextension(s) = v$ for every $p\in P$. By Conditions 
{\bf C3} and {\bf C4} of Definition \ref{definition:GTDecomposition}, for each edge $e$ such that $(e,v)\in \incidence_G$, 
there exists a (unique) $p\in P$ such that $\hat{\boldT}[p].\edgeextension=\{e\}$.  
Now, let $p\in P$, and $e$ be the unique edge such that $\hat{\boldT}[p].\edgeextension=\{e\}$ and assume that 
$\edgeendpoints(e) = \{v,v'\}$. Then by Condition {\bf C2}, there is some $s'$ such that 
$\hat{\boldT}[p].\bagedge = \{s,s'\}$ and $\hat{\boldT}[p].\vertexextension(s') = v'$. Therefore, by Equation 
\ref{equation:IsoVertex}, we have that $\vertexiso(v_{s',P(s',p)}) = v'$. In other words, we have shown that 
if $\isomorphism(v_{s,P}) = v$, then for each neighbour $v'$ of $v$, there is some $p\in P$ and some $s'\in [t]$
such that $\vertexiso(v_{s',P(s',p)}) = v'$. Since $G$ is connected, this implies that for each 
$v\in V_G$ there is some $v_{s,P}$ such that $\vertexiso(v_{s,P}) = v$. 

Now, it remains to show that the function $\edgeiso$ is also a bijection. Let $e$ be an 
edge in $E_G$, and let $v$ be an endpoint of $e$. Then there is some $v_{s,P}\in \graph(\boldT)$
such that $\vertexiso(v_{s,P}) = v$. By the discussion above, this implies that for some $p\in P$, 
$\hat{\boldT}[p].\edgeextension = \{e\}$. Therefore, by Equation \ref{equation:IsoEdge}, we have that 
$\edgeiso(e_p) = e$. Thus we have shown that for each edge $e\in E_G$ there exists some $p\in \positions(\boldT)$
such that $\edgeiso(e_{p}) = e$. In other words, we have shown that $\edgeiso$ is surjective. Now 
we need to show that $\edgeiso$ is injective. Towards this goal, assume that $\edgeiso(e_{p})=\edgeiso(e_{p'})=e$ for 
some $e\in E_G$ and some distinct positions $p,p'\in \positions(\boldT)$ and assume that 
$\edgeendpoints(e) = \{u,v\}$. Then we have that there exists $s_1,s_2\in [t]$ such that
$\boldT[p].b=\{s_1,s_2\}$ and $\boldT[p].\vertexextension(s_1) = v$ and $\boldT[p].\vertexextension(s_2) = u$. 
Analogously, there exists $s_1',s_2'\in [t]$ such that
$\boldT[p'].b=\{s_1',s_2'\}$ and $\boldT[p'].\vertexextension(s_1') = v$ and $\boldT[p].\vertexextension(s_2') = u$. 
Then, from Equation \ref{equation:IsoVertex} and from the fact that $\vertexiso$ is injective,
we have that $s_1=s_1'$, $P(s_1,p) = P(s_1',p')$, $s_2=s_2'$ and $P(s_2,p)=P(s_2',p)$. 
But by Conditions {\bf C3} and {\bf C4} of Definition \ref{definition:GTDecomposition}, there is a unique 
position $p \in P(s_1,p) = P(s_1,p')$ such that $\boldT[p].\edgeextension = \{e\}$. Therefore, $p=p'$. This shows 
that the function $\edgeiso$ is injective.  
\end{proof}

Note that conditions {\bf C1}-{\bf C5} are local in the sense that
they may be verified at each position $p\in \positions(\hat{\boldT})$ by analysing only the 
concrete bags $\hat{\boldT}[p],\hat{\boldT}[p1],...,\hat{\boldT}[pr]$ 
where $p1,...,pr$ are the children of $p$. This allows us to define a tree automaton $\hat{\treeAutomaton}(G,t)$
over $\concretebags(G,t)$ that accepts a term $\hat{\boldT}\in \myterms(\concretebags(G,t))$ if 
and only if $\hat{\boldT}$ is a $(G,t)$-concrete tree decomposition. 

\begin{lem}
\label{lemma:AutomatonGTDecompositions}
For each positive integer $t$ and each graph $G$ of maximum degree $\maximumdegree$, 
one can construct in time $2^{O(\maximumdegree)}\cdot |V_G|^{O(t)}$ a tree automaton $\hat{\treeAutomaton}(G,t)$
over $\concretebags(G,t)$ accepting the following language. 
\begin{equation}
\lang(\hat{\treeAutomaton}(G,t)) = \{\hat{\boldT}\in \myterms(\concretebags(G,t))\;|\; 
\hat{\boldT} \mbox{ is a $(G,t)$-concrete tree decomposition.} \} 
\end{equation}
\end{lem}
\begin{proof}
Let $\hat{B}_1,...,\hat{B}_r, \hat{B}$ be $(G,t)$-concrete bags for $0\leq r \leq 2$, 
where $\hat{B} = (\bagset,\bagedge,\vertexextension,\neighbourextension,\edgeextension)$ 
$\hat{B}_j = (\bagset_j,\bagedge_j,\vertexextension_j,\neighbourextension_j,\edgeextension_j)$ for $j\in [r]$. 
We say that the tuple $(\hat{B}_1,...,\hat{B}_r,\hat{B})$ is $(G,t)$-compatible if the following conditions 
are satisfied for each $s\in [t]$. 

\begin{enumerate}
	\item If $s\in \bagset_j \cap \bagset$ for some $j\in [r]$ then 
		$\vertexextension(s) = \vertexextension_j(s)$. 
	\item If $s\in \bagedge$ then $\edgeextension = \{e\}$ for some 
		edge $e$ such that $(e,\vertexextension(s)) \in \incidence_{G}$. 
	\item $\neighbourextension(s) = \edgeextension \;\dot\cup\; 
		\neighbourextension_1(s)\; \dot\cup\; ... \;\dot\cup\; \neighbourextension_r(s).$
	\item If $s\in \rho$ then $\neighbourextension(s) = \{e\;|\; (e,\vertexextension(v)) \in \incidence_G\}$.
	\item $s\in \rho_j$ if and only if $s\in \bagset_j$ and $s\notin \bagset$.
\end{enumerate}

We define the tree automaton $\treeAutomaton =  (\states,\concretebags(G,t),\finalstates,\transitionsautomaton)$ as 
follows. 
\newcommand{\aconcretebag}{\mathbold{B}}
\begin{equation}
\begin{array}{c}
\states = \{\astate_{\hat{\aconcretebag}}\;|\; \mbox{ $\hat{B}$ is a $(G,t)$-concrete bag.}\} \hspace{1cm} 
\finalstates = \{\astate_{\hat{\aconcretebag}}\in \states \;|\; \hat{\aconcretebag}.\rootmarking = \hat{\aconcretebag}.\bagset\} \\
\\
\transitionsautomaton = 
\{(\astate_{\hat{\aconcretebag}_1},...,\astate_{\hat{\aconcretebag}_k},\hat{\aconcretebag},\astate_{\hat{\aconcretebag}})\;|\;
(\hat{\aconcretebag}_1,...,\hat{\aconcretebag}_r,\hat{\aconcretebag}) \mbox{ is $(G,t)$-compatible.}
\}
\end{array}
\end{equation}
Now, it can be shown by induction on the height of terms that a term 
$\hat{\boldT}\in \myterms(\concretebags(G,t))$ reaches a state $\astate_{\hat{\aconcretebag}}$ if 
and only if $\boldT[\emptystring] = \hat{\aconcretebag}$ and that conditions {\bf C1}-{\bf C5} of 
Definition \ref{definition:GTDecomposition} are satisfied for each $p\in \positions(\hat{\boldT})$ and 
each $s\in [t]$. In particular, this implies that $\hat{\treeAutomaton}(G,t)$ accepts a term 
$\hat{\boldT}\in \myterms(\concretebags(G,t))$ if and only if $\hat{\boldT}$ is a $(G,t)$-concrete 
tree decomposition. 
\end{proof}

The next lemma states that for each positive integers $t$ and $n$, one can efficiently construct a tree automaton 
$\treeAutomaton(t,n)$ which accepts precisely those $t$-concrete tree decompositions which give rise to graphs 
with $n$ vertices. 

\begin{lem}
\label{lemma:GraphsGivenSize}
Let $t$ and $n$ be positive integers with $t\leq n$. One can construct in time $2^{O(t)}\cdot n^3$ a tree automaton 
$\treeAutomaton(t,n)$ over $\concretebags(t)$ accepting the following language. 
\begin{equation*}
\lang(\treeAutomaton(t,n)) = \{\boldT\in \myterms(\concretebags(t))\;|\; |V_{\graph(\boldT)}|=n\} 
\end{equation*}
\end{lem}
\begin{proof}
Let $\boldT$ be a term in $\myterms(\concretebags(t))$. Recall that for each 
position $p\in \positions(\boldT)$, and each $s\in [t]$, the set $P(s,p)$ is the unique $s$-maximal 
subset of $\positions(\boldT)$ that contains position $p$. We say that $(s,p)$ is a $\boldT$-root-pair
if $p$ is the root of $P(s,p)$. We note that the number of vertices of the graph $\graph(\boldT)$
is equal to the number of $\boldT$-root-pairs in $[t]\times \positions(\boldT)$. Therefore, in order 
to construct an automaton that accepts precisely the terms $\boldT\in \positions(\boldT)$ 
that give rise to graphs with $n$ vertices, it is enough to define an automaton that 
accepts precisely those terms $\boldT$ that admit $n$ $\boldT$-root-pairs. 

A {\em root marking} for a set $\bagset\subseteq [t]$ is a set $\rho\subseteq \bagset$. 
The automaton $\treeAutomaton(t,n) = (\states,\concretebags(t),\finalstates,\transitionsautomaton)$
is defined as follows. 
\begin{equation*}
\begin{array}{l}
\states = \{ \astate_{\bagset,\rho,j}\;|\; j\in \{0,...,n\},\; \bagset\subseteq [t], \; \rho 
\mbox{ is a root marking for $\bagset$}\} \hspace{1cm} \finalstates = \{\astate_{\bagset,\rho,n}\}\\
\\
\transitionsautomaton = \{(\astate_{\bagset_1,\rho_1,j_1}, ... ,\astate_{\bagset_r,\rho_r,j_r}, 
(\bagset,\bagedge),\astate_{\bagset,\rho,j})\;|\; 0\leq r \leq 2,\; (\bagset,\bagedge)\in \concretebags(t),\; 
j = |\rho| + \sum_{i=1}^r |\rho_i|,\; \\
\hspace{5.5cm} \mbox{for each $s\in [t]$, if $s\in \rho_i$ for some $i\in [r]$ then $s\notin \bagset$}\}
\end{array}
\end{equation*}

Then it follows by induction on the height of terms that a term $\boldT\in \myterms(\concretebags(t))$
reaches a state $\astate_{\bagset,\rho,j}$ if and only if $\boldT[\emptystring].\bagset = \bagset$, 
$(s,\emptystring)$ is a $\boldT$-root-pair for each $s\in \rho$, and $j$ is the number of $\boldT$-root-pairs 
in $[t]\times \positions(\boldT)$. In particular, the number of $\boldT$-root-pairs is equal to $n$
if and only if $\boldT$ reaches some final state of $\treeAutomaton$. Note that since 
$|\concretebags(t)| = 2^{O(t)}$, and since $r\leq 2$, there are at most $2^{O(t)}\cdot n$ 
states in $\states$ and at most $2^{O(t)}\cdot n^{3}$ transitions in $\transitionsautomaton$. 
Therefore, $\treeAutomaton(t,n)$ can be constructed in time $2^{O(t)}\cdot n^{3}$. 
\end{proof}

The main result of this section (Theorem \ref{theorem:AllDecompositions}), 
 follows by a combination of Theorem \ref{theorem:AllConcreteDecompositions}, Lemma
\ref{lemma:AutomatonGTDecompositions} and Lemma \ref{lemma:GraphsGivenSize}.

\begin{thm}
\label{theorem:AllDecompositions}
Let $G$ be a connected graph of treewidth $t$ and maximum degree $\Delta$. Then one can construct 
in time $2^{O(\Delta\cdot t)}\cdot |V_G|^{O(t)}$ a tree automaton $\automaton(G,t)$ over $\concretebags(t)$ 
such that for each $\boldT\in \myterms(\concretebags(t))$, $\boldT\in \lang(\automaton(G,t))$ if and only 
if $\boldT$ is a concrete tree decomposition of $G$.
\end{thm}
\begin{proof}
By Lemma \ref{lemma:AutomatonGTDecompositions}, one can construct in time $2^{O(\Delta\cdot t)}\cdot |V_G|^{O(t)}$
a tree automaton $\hat{\treeAutomaton}(G,t)$ over $\concretebags(G,t)$ which accepts precisely the 
$(G,t)$ concrete tree decompositions that belong to $\myterms(\concretebags(G,t))$. 

Therefore, the tree automaton $\concreteprojection(\treeAutomaton)$ accepts precisely those $t$-concrete tree 
decompositions $\boldT \in \myterms(\concretebags(t))$ such that $\boldT = \concreteprojection(\hat{\boldT})$ 
for some $(G,t)$-concrete tree decomposition $\hat{\boldT}\in \lang(\hat{\treeAutomaton}(G,t))$. Note that 
$\concreteprojection(\hat{\treeAutomaton}(G,t))$ can be constructed in time $O(|\hat{\treeAutomaton}(G,t)|)$
by Lemma \ref{lemma:Projection}. 

Now, by Lemma \ref{lemma:GraphsGivenSize} we can construct in time $2^{O(t)}\cdot |V_{G}|^{O(1)}$ 
a tree automaton $\treeAutomaton(t,|V_G|)$ over $\concretebags(t)$ which accepts a $t$-concrete tree 
decomposition in $\myterms(\concretebags(t))$ if and only if $|V_{\graph(\boldT)}| = |V_{G}|$. 

Therefore if we set $\treeAutomaton(G,t) = \concreteprojection(\hat{\treeAutomaton}(G,t))\cap \treeAutomaton(t,|V_G|)$, then 
we have that $\treeAutomaton(G,t)$ accepts precisely those $t$-concrete tree decompositions $\boldT\in \concretebags(t)$
such that $|V_{\graph(\boldT)}| = |V_{G}|$ and $\boldT = \concreteprojection(\hat{\boldT})$ for some $(G,t)$-concrete
tree decomposition $\hat{\boldT}$. By Lemma \ref{lemma:EmptynessIntersection}, $\treeAutomaton(G,t)$ can be constructed in time 
$2^{O(\maximumdegree \cdot t)}\cdot |V_G|^{O(t)}$.  
\end{proof}

\section{$(\varphi,t)$-Supergraphs}
\label{section:SupergraphsAndCompletions}

Let $\varphi$ be a $\cmso$ sentence, and $t$ be a positive integer.
Let $\agraph$ and $\agraph'$ be  graphs. 
We say that $\agraph'$ is a $(\varphi,t)$-supergraph of $\agraph$ if the following three conditions are satisfied: 
${\agraph'\models \varphi}$,  $\agraph'$ has treewidth at most $t$, and $\agraph$ is a subgraph of $\agraph'$.

\begin{lem}
\label{lemma:SubdecompositionsSplitAutomatonMSO}
Let $\varphi$ be a $\cmso$ sentence and $t$ be a positive integer. 
Then a  graph $\agraph$ has a $(\varphi,t)$-supergraph 
if and only if there exists a $(t+1)$-concrete tree decomposition 
$\boldT \in \lang(\subdecompositions(\treeAutomaton(\varphi,t+1))$
such that $\graph(\boldT)$ is isomorphic to $\agraph$.
\end{lem}
\begin{proof}
Assume that $\agraph$ is a  graph that has a
$(\varphi,t)$-supergraph $\agraph'$. Then $\agraph'$ satisfies $\varphi$, 
$\agraph'$ has treewidth at most $t$, and $\agraph$ is a subgraph of $\agraph'$. 
By Observation \ref{observation:TreeDecompositionsConcreteTreeDecompositions}, $\agraph'$ has a $(t+1)$-concrete 
tree decomposition $\boldT'\in \myterms(\concretebags(t+1))$, and therefore by Theorem 
\ref{theorem:AutomatonMSO}, $\boldT' \in \lang(\treeAutomaton(\varphi,t))$.
Since $\agraph$ is a subgraph of $\agraph'$, by Theorem \ref{theorem:SubdecompositionLemma},
$\boldT'$ has a sub-decomposition $\boldT$ which is a $(t+1)$-concrete tree decomposition of 
$\agraph$. Therefore, $\boldT$ belongs to $\subdecompositions(\treeAutomaton(\varphi,t+1))$.

For the converse, let $\boldT\in \lang(\subdecompositions(\treeAutomaton(\varphi,t+1)))$ and 
let $\boldT$ be a $(t+1)$-concrete tree decomposition of $\agraph$. 
Then $\boldT$ is a sub-decomposition of some $(t+1)$-concrete tree decomposition $\boldT'$ 
in $\lang(\treeAutomaton(\varphi,t+1))$. By Theorem \ref{theorem:AutomatonMSO}, 
$\boldT'$ is a $(t+1)$-concrete tree decomposition of some graph $\agraph'$ of treewidth at most $t$ such that 
$\agraph'\models \varphi$. Since $\boldT$ is a sub-decomposition of $\boldT'$, 
by Theorem \ref{theorem:SubdecompositionLemma}, $\agraph$ is a subgraph of $\agraph'$. 
Therefore, $\agraph'$ is a $(\varphi,t)$-supergraph of $\agraph$. 
\end{proof}

We note that Lemma \ref{lemma:SubdecompositionsSplitAutomatonMSO} alone does yield 
an algorithm to determine whether a graph $\agraph$ has a $(\varphi,t)$-supergraph. If $\agraph$
does not admit such a supergraph, then no $(t+1)$-concrete tree decomposition $\agraph$ belongs to 
$\lang(\subdecompositions(\treeAutomaton(\varphi,t+1)))$. However, if $\agraph$ does admit 
a $(\varphi,t)$-supergraph, then Theorem \ref{theorem:SubDecompositions} only guarantees 
that some $(t+1)$-concrete tree decomposition $\boldT$ of 
$\agraph$ belongs to $\subdecompositions(\treeAutomaton(\varphi,t+1))$. The problem is 
that $\agraph$ may have infinitely many $(t+1)$-concrete tree
decompositions and we do not know a priori which of these should be tested for membership in $\lang(\subdecompositions(\treeAutomaton(\varphi,t+1)))$. 

The issue described above can be remedied with the results from Section \ref{section:AllDecompositions}.
More specifically, from Theorem \ref{theorem:AllDecompositions} we have that for any given connected graph
$G$ of treewidth $t$ and maximum degree $\Delta$, one can construct a tree automaton $\treeAutomaton(G,t+1)$ over 
$\concretebags(t+1)$ which accepts a $(t+1)$-concrete tree decomposition $\boldT$ if and only if 
the graph $\graph(\boldT)$ is isomorphic to $G$. Note that $\lang(\treeAutomaton(\agraph,t+1))$ is an infinite language that contains $(t+1)$-concrete tree 
decompositions of $G$ of all shapes and sizes.  
Therefore, a connected graph $\agraph$ has a $(\varphi,t)$-supergraph if and only if 

\begin{equation}
\label{equation:Solution}
\lang(\treeAutomaton(\agraph,t+1))\; \cap\; 
\lang(\subdecompositions(\treeAutomaton(\varphi,t+1)))\neq \emptyset.
\end{equation}

The next theorem states that Equation \ref{equation:Solution} yields an efficient algorithm 
for testing whether connected graphs of bounded degree have a $(\varphi,t)$-supergraph.

\begin{thm}[Main Theorem]
\label{theorem:MainTheoremGraphCompletion}
There is a computable function $f$, and an algorithm $\algorithm$ that takes as input 
a $\cmso$ sentence $\varphi$, a positive integer $t$, and 
a connected  graph $\agraph$ of maximum degree $\Delta$,
and determines in time $f(|\varphi|,t)\cdot 2^{O(\Delta\cdot t)}\cdot |\agraph|^{O(t)}$ 
whether $\agraph$ has a $(\varphi,t)$-supergraph. 
\end{thm}
\begin{proof}

By Lemma \ref{lemma:SubdecompositionsSplitAutomatonMSO}, $\agraph$ has a $(\varphi,t)$-supergraph
if and only if there exists some $\boldT\in \lang(\subdecompositions(\treeAutomaton(\varphi,t+1)))$ 
such that $\boldT$ is a $(t+1)$-concrete tree decomposition of $\agraph$. 
By Theorem \ref{theorem:AllDecompositions}, $\lang(\treeAutomaton(\agraph,t+1))$
accepts a $(t+1)$-tree decomposition of $\agraph$ if and only if $\graph(\boldT)$  is isomorphic to 
$G$. Therefore, $\agraph$ has a $(\varphi,t)$-supergraph if and only the 
intersection of $\lang(\treeAutomaton(\agraph,t+1))$ with $\lang(\subdecompositions(\treeAutomaton(\varphi,t+1)))$ is nonempty. 

By Theorem \ref{theorem:AllDecompositions}, the tree-automaton $\treeAutomaton(\agraph,t+1)$ 
can be constructed in time $2^{O(\Delta\cdot t)}\cdot |\agraph|^{O(t)}$, and therefore the size of 
$\treeAutomaton(\agraph,t+1)$ is bounded by $2^{O(\Delta\cdot t)}\cdot |\agraph|^{O(t)}$. 
By Theorem \ref{theorem:SubDecompositions} 
and Theorem \ref{theorem:AutomatonMSO}, the tree-automaton $\subdecompositions(\treeAutomaton(\varphi,t+1))$ 
can be constructed in time $f(|\varphi|,t)$ for some computable function $f:\N^2\rightarrow \N$, and therefore, 
the size of $\subdecompositions(\treeAutomaton(\varphi,t))$ is bounded by $f(|\varphi|,t)$. 

Finally, given tree automata $\treeAutomaton_1$ and $\treeAutomaton_2$, one can determine 
whether $\lang(\treeAutomaton_1)\cap \lang(\treeAutomaton_2)\neq \emptyset$ in time 
$O(|\treeAutomaton_1|\cdot |\treeAutomaton_2|)$ (Lemma \ref{lemma:EmptynessIntersection}). In particular, 
one can determine whether 
$$\lang(\treeAutomaton(\agraph,t+1))\cap \lang(\subdecompositions(\treeAutomaton(\varphi,t+1)))\neq \emptyset$$
in time ${f(|\varphi|,t)\cdot 2^{O(\Delta\cdot t)}\cdot |\agraph|^{O(t)}}$.
\end{proof}

\section{Contraction Closed Graph Parameters}
\label{section:PlanarDiameterImprovement}
\label{section:ContractionClosedParameters}

In this section we deal with simple graphs, i.e., graphs without loops or multiple edges. Therefore, 
we may write $\{u,v\}$ to denote an edge $e$ whose endpoints are $u$ and $v$. Additionally, whenever
speaking of a property specified by an CMSO formula $\varphi$, we assume that $\varphi$ ensures 
that its models are simple graphs. 

Let $\agraph$ be a graph and $\{u,v\}$ be an edge of $\agraph$. We let $\agraph/uv$
denote the graph that is obtained from $\agraph$ by deleting the edge $\{u,v\}$ and by merging vertices 
$u$ and $v$ into a single vertex $x_{uv}$. We say that $\agraph/uv$ is obtained from $\agraph$ by an edge-contraction. 
We say that a graph $\agraph'$ is a {\em contraction} of $\agraph$ if 
$\agraph'$ is obtained from $\agraph$ by a sequence of edge contractions. 
We say that $\agraph'$ is a minor of $\agraph$ if $\agraph'$ is a contraction of 
some subgraph of $\agraph$. We say that a graph $\agraph$ is an apex graph if after deleting some 
vertex of $\agraph$ the resulting graph is planar. 

\newcommand{\graphproperty}{\mathcal{P}}
A graph parameter is a function $\graphparameter$ mapping graphs to non-negative integers in such a way that 
$\graphparameter(\agraph) = \graphparameter(\agraph')$ whenever $\agraph$ is isomorphic to $\agraph'$. We say that $\graphparameter$ 
is contraction closed if $\graphparameter(\agraph')\leq \graphparameter(G)$ whenever $\agraph'$ is a contraction of 
$\agraph$. 

A graph property is simply a set $\graphproperty$ of graphs. We say that a property 
$\graphproperty$ is contraction-closed if for every two graphs $\agraph, \agraph'$ for which $\agraph'$ is a 
contraction of $\agraph$, the fact that $\agraph\in \graphproperty$ implies that $\agraph'\in \graphproperty$. 

\subsection{Diameter Improvement Problems}
\label{subsection:DiameterImprovement}

Let $u$ and $v$ be vertices in an  graph $\agraph$. The distance from $u$
to $v$, denoted by $\mathit{dist}(u,v)$ is the number of edges in the shortest path from $u$ to $v$.
If no such path exists, we set $\mathit{dist}(u,v)=\infty$. The diameter of $\agraph$ is defined as 
$\mathit{diam}(\agraph) = \max_{u,v}\mathit{dist}(u,v)$. In the {\sc planar diameter improvement} problem (PDI),
we are given an  graph $\agraph$ and a positive integer $d$, and the goal 
is to determine whether $\agraph$ has a planar supergraph $\agraph'$ of diameter at most $d$. As mentioned 
in the introduction, there is an algorithm that solves the PDI problem in time $f(d)\cdot |\agraph|^{O(1)}$, where 
$f:\N\rightarrow \N$ is not known to be computable. Additionally, even the problem of determining whether PDI
admits an algorithm running in time $f_1(d)\cdot |G|^{f_2(d)}$ for computable functions $f_1,f_2$ remains open 
for more than two decades \cite{FellowsLangston1989,CohenGoncalvesKimPaulSauThilikosWeller15}. The next 
theorem solves this problem when the input graphs are connected and have bounded degree. 

\begin{thm}
\label{theorem:DiameterImprovementPlanar}
There is a computable function $f:\N\rightarrow \N$, and an algorithm $\algorithm$ that 
takes as input, a positive integer $d$, and a connected  graph $\agraph$
of maximum degree $\maximumdegree$, and determines in time $f(d)\cdot 2^{O(\maximumdegree\cdot d)}\cdot |\agraph|^{O(d)}$ 
whether $\agraph$ has a planar supergraph $\agraph'$ of diameter at most $d$. 
\end{thm}
\begin{proof}
It should be clear that there is an algorithm that takes a positive integer $d$ as input, and constructs, 
in time $O(d)$, a $\cmso$ formula $\mathit{Diam}_d$ which is true on a  graph $\agraph'$ if and only if $\agraph'$ has 
diameter at most $d$. Additionally, using Kuratowski's theorem, and the fact that minor relation is CMSO expressible, 
one can define a CMSO formula $\mathit{Planar}$ which is true on a  graph $\agraph'$ if and only if $\agraph'$ 
	is planar. Finally, it can be shown that any planar graph of diameter at most $d$ has treewidth $O(d)$ (see Lemma 1 of \cite{Eppstein2000}). 
Therefore, by setting $\varphi=\mathit{Diam}_d \wedge \mathit{Planar}$, $t=O(d)$, and by renaming $f(|\varphi|,t)$ to $f(d)$ in 
Theorem \ref{theorem:MainTheoremGraphCompletion}, we have an algorithm running in time 
$f(d)\cdot 2^{O(\maximumdegree\cdot d)}\cdot |\agraph|^{O(d)}$ to determine whether $\agraph$ has a planar supergraph 
$\agraph'$ of diameter at most $d$. 
\end{proof}

We note that the algorithm $\algorithm$ of Theorem \ref{theorem:DiameterImprovementPlanar} does not impose 
any restriction on the degree of a prospective supergraph $\agraph'$ of $\agraph$. Theorem \ref{theorem:DiameterImprovementPlanar}
can be generalized to the setting of graphs of constant genus as follows.  

\begin{thm}
\label{theorem:DiameterImprovementGenus}
There is a computable function $f:\N\times \N\rightarrow \N$, and an algorithm $\algorithm$ that 
takes as input, positive integers $d,g$, and a connected  graph $\agraph$
of maximum degree $\maximumdegree$, and determines in time 
$f(d,g)\cdot 2^{O(\maximumdegree\cdot d)}\cdot |\agraph|^{O(d\cdot g)}$ whether 
$G$ has a supergraph $\agraph'$ of genus at most $g$ and diameter at most $d$. 
\end{thm}
\begin{proof}
It can be shown that there is an explicit algorithm that takes a positive integer $g$ as input and constructs 
a CMSO sentence $\mathit{Genus}_g$ that is true on a graph $\agraph'$ if and only if $\agraph'$ has 
genus at most $g$ \cite{AdlerGroheKreutzer2008}. Additionally, 
it can be shown that graphs of genus $g$ and diameter $d$ have treewidth at most $O(g\cdot d)$ \cite{Eppstein2000}. 
Therefore, by setting $\varphi = \mathit{Genus}_g\wedge \mathit{Diam}_d$, $t=O(g\cdot d)$, and 
by renaming $f(|\varphi|,t)$ to $f(g,d)$ in Theorem \ref{theorem:MainTheoremGraphCompletion}, 
we have that one can determine in time $f(d,g)\cdot 2^{O(\maximumdegree\cdot d)}\cdot |\agraph|^{O(d\cdot g)}$
whether $\agraph$ has a supergraph $\agraph'$ of genus at most $g$ and diameter at most $d$. 
\end{proof}

A graph is $1$-outerplanar if it can be embedded in the plane in such a way that every vertex lies in the 
outer face of the embedding. A graph is $k$-outerplanar if it can be embedded in the plane in such a way that after 
deleting all vertices in the outer face, the remaining graph is ${(k-1)}$-outerplanar. In \cite{CohenGoncalvesKimPaulSauThilikosWeller15}
Cohen et al. have considered the {\sc $k$-outerplanar diameter improvement} problem {($k$-OPDI)}, a variant of 
the PDI problem in which the target supergraph is required to be $k$-outerplanar instead of planar. In particular, 
they have shown that the $1$-OPDI problem can be solved in polynomial time. The complexity of the $k$-OPDI problem 
with respect to explicit algorithms was left as an open problem for $k\geq 2$. The next theorem states 
that for each fixed $k$, $k$-OPDI is strongly uniformly fixed parameter tractable with respect to the parameter $d$ 
on connected graphs of bounded degree. 

\begin{thm}
\label{theorem:DiameterImprovementOuterplanar}
There is a computable function $f:\N\times \N\rightarrow \N$, and an algorithm $\algorithm$ that 
takes as input, positive integers $d,k$, and a connected  graph $\agraph$
of maximum degree $\maximumdegree$, and determines in time $f(k,d)\cdot 2^{O(\maximumdegree\cdot k)}\cdot |\agraph|^{O(k)}$ whether 
$G$ has a $k$-outerplanar supergraph $\agraph'$ of diameter at most $d$. 
\end{thm}
\begin{proof}
There is an algorithm that takes an integer $k$ as input and constructs in time $O(k)$ 
a CMSO sentence $\mathit{Outer}_k$ that is true on a graph $\agraph$ if and only 
if $\agraph$ is $k$-outerplanar \cite{JaffkeBodlaender2015}. Additionally, it can be shown that 
$k$-outerplanar graphs have treewidth $O(k)$. Therefore, by setting $\varphi = \mathit{Outer}_k\wedge \mathit{Diam}_d$, 
and $t=O(k)$ in Theorem \ref{theorem:MainTheoremGraphCompletion}, it follows that the problem of determining 
whether $\agraph$ has a $k$-outerplanar supergraph of diameter at most $d$ can be decided in time 
$f(k,d)\cdot 2^{O(\maximumdegree\cdot k)}\cdot |\agraph|^{O(k)}$ for some computable function $f:\N\times \N\rightarrow \N$.
\end{proof}

Finally, the {\sc series-parallel diameter improvement} problem (SPDI) consists in determining whether 
a graph $\agraph$ has a series parallel supergraph of diameter at most $d$. The parameterized complexity 
of this problem was left as an open problem in \cite{CohenGoncalvesKimPaulSauThilikosWeller15}. The next 
theorem states that SPDI is strongly uniformly fixed parameter tractable with respect to the parameter
$d$ on connected graphs of bounded degree.  

\begin{thm}
\label{theorem:DiameterImprovementSeriesParallel}
There is a computable function $f:\N \rightarrow \N$, and an algorithm $\algorithm$ that 
takes as input, a positive integer $d$ and a connected  graph $\agraph$ 
of maximum degree $\maximumdegree$, and determines in time $f(d)\cdot 2^{O(\maximumdegree)}\cdot |\agraph|^{O(1)}$ whether 
$\agraph$ has a series-parallel supergraph $\agraph'$ of diameter at most $d$. 
\end{thm}
\begin{proof}
There is a CMSO formula $\mathit{SP}$ which is true on a graph $\agraph'$ if and only if $\agraph'$ is series parallel. 
Additionally, series parallel graphs have treewidth at most $2$. Therefore, by setting 
$\varphi=\mathit{SP}\wedge \mathit{Diam}_d$ and $t=O(1)$ in Theorem \ref{theorem:MainTheoremGraphCompletion}, it follows 
that the problem of determining whether $\agraph$ has a series-parallel supergraph of diameter at most $d$ can be 
decided in time $f(d)\cdot 2^{O(\maximumdegree)} \cdot |\agraph|^{O(1)}$ for some computable function $f$. 
\end{proof}

\subsection{Contraction Bidimensional Parameters}

\newcommand{\gammagraph}{\Gamma}

Fomin, Golovach and Thilikos \cite{FominGolovachThilikos2011Contraction} have defined a sequence 
$\{\gammagraph_k\}_{k\in \N}$ of graphs and have shown that these graphs serve as 
obstructions for small treewidth on $H$-minor free graphs, whenever $H$ is an apex graph. 
More precisely, they have proved the following result. 

\begin{thm}[Fomin-Golovach-Thilikos \cite{FominGolovachThilikos2011Contraction}]
\label{theorem:FominGolovachThilikosGamma}
For every apex graph $H$, there is a $c_H>0$ such that every connected $H$-minor-free 
graph of treewidth at least $c_H\cdot k$ contains $\gammagraph_k$ as a contraction. 
\end{thm}

We say that a graph parameter $\graphparameter$ is Gamma-unbounded if there is a computable function 
$\alpha:\N\rightarrow \N$ such that $\alpha\in \omega(1)$, and $\graphparameter(\gammagraph_k) \geq \alpha(k)$ for
every $k\in \N$. 

We say that a parameter $\graphparameter$ is effectively CMSO definable if there is a computable function $f:\N\rightarrow \N$, 
and an algorithm $\algorithm$ that takes as input a positive integer $k$ and constructs, in time at most $f(k)$,
a $\cmso$-sentence $\varphi$ which is true on an  graph 
$\agraph$ if and only if $\graphparameter(\agraph)\leq k$. The following theorem is a corollary of 
Theorem \ref{theorem:MainTheoremGraphCompletion} and Theorem \ref{theorem:FominGolovachThilikosGamma}. 

\begin{thm}
\label{theorem:MostGeneral}
Let $\graphparameter$ be a Gamma-unbounded effectively CMSO definable graph parameter, and let $\mathcal{P}$ be a 
CMSO definable graph property excluding some apex graph $H$ as a minor. Then there is a computable function 
$f:\N \rightarrow \N$ and an algorithm $\algorithm$ that takes as input a positive integer $k$, and a connected  
 graph $\agraph$ of maximum degree $\Delta$, and determines, in time 
$f(k)\cdot 2^{O(\Delta\cdot f(k))}\cdot |\agraph|^{f(k)}$, whether $\agraph$ has a supergraph $\agraph'$ 
such that $\agraph'\in \graphproperty$ and $\graphparameter(\agraph')\leq k$.
\end{thm}

Note that similarly to the case of diameter improvement problem, if $\graphparameter$ is an unbounded effectively 
CMSO definable graph parameter, then we can determine whether a graph $\agraph$ has an $r$-outerplanar 
supergraph $\agraph'$ with $\graphparameter(\agraph')\leq k$ in time $f(r,k)\cdot 2^{O(\Delta\cdot r)}\cdot |G|^{O(r)}$
for some computable function $f:\N\times \N\rightarrow \N$. In other words, this problem, for connected bounded degree graphs, 
is strongly uniformly fixed parameter tractable with respect to the parameter $\graphparameter$ for each fixed $r$. 

\begin{defi}
\label{definition:ContractionBidimensionalParameter}
A graph parameter $\graphparameter$ is contraction-bidimensional if the following conditions are
satisfied. 
\begin{enumerate}
\setlength\itemsep{3pt}
\item $\graphparameter$ is contraction-closed. 
\item If $\agraph$ is a graph which has $\gammagraph_k$ as a contraction, 
	then $\graphparameter(\agraph)\geq \Omega(k^2)$. 
\end{enumerate}
\end{defi}

For instance, the following parameters are contraction bidimensional.

\begin{enumerate} 
	\item Size of a vertex cover.
	\item Size of a feedback vertex set.
	\item Size of a minimum maximal matching.
	\item Size of a dominating set.
	\item Size of a edge dominating set.
	\item Size of a clique traversal set. 
\end{enumerate}

\begin{thmC}[\cite{FominGolovachThilikos2011Contraction,FominLokshtanovSaurabhThilikos2010}]
\label{theorem:BidimensionalityTreewidth}
Let $\graphparameter$ be a bidimensional parameter. Then if $\graphparameter(\agraph)\leq k$, 
the treewidth of $\graphparameter$ is at most $O(\sqrt{k})$. 
\end{thmC}

\begin{thm}
\label{theorem:ContractionBidimensionalPlanarImprovement}
For each effectively $CMSO$-definable contraction-bidimensional parameter $\graphparameter$,
there exists a computable function $f:\N\rightarrow \N$ and an algorithm $\algorithm$ that takes as 
input a positive integer $k$, and a connected  graph $\agraph$
of maximum degree $\maximumdegree$, and determines in time  $f(k)\cdot 2^{O(\maximumdegree\cdot \sqrt{k})}\cdot |\agraph|^{O(\sqrt{k})}$ 
whether $\agraph$ has a planar supergraph $\agraph'$ with $\graphparameter(\agraph')\leq k$.
\end{thm}
\begin{proof}
Since $\graphparameter$ is effectively CMSO definable, there is some computable function $\alpha$ and an algorithm 
that take a positive integer $k$ as input and constructs in time $\alpha(k)$ a CMSO sentence $\varphi_{\graphparameter}$
which is true on an  graph $\agraph$ if and only if $\graphparameter(\agraph)\leq k$. Additionally, 
by Theorem \ref{theorem:BidimensionalityTreewidth}, if $\graphparameter(\agraph)\leq k$, then the treewidth 
of $\agraph$ is bounded by $\sqrt{k}$. Therefore, by applying Theorem \ref{theorem:MainTheoremGraphCompletion} 
with $\varphi = \varphi_{\graphparameter}$, and $t=O(\sqrt{k})$, the theorem follows. 
\end{proof}

For instance, Theorem \ref{theorem:ContractionBidimensionalPlanarImprovement} states that for some computable 
function $f:\N\rightarrow \N$, one can determine in time 
$f(k)\cdot 2^{O(\maximumdegree\cdot \sqrt{k})}\cdot |\agraph|^{O(\sqrt{k})}$ whether $\agraph$ has a
planar supergraph $\agraph'$ with feedback vertex set at most $k$. We note that 
in view of Theorem \ref{theorem:BidimensionalityTreewidth}, the planarity requirement of Theorem 
\ref{theorem:ContractionBidimensionalPlanarImprovement} can be replaced for any CMSO definable property $\graphproperty$
which excludes some apex graph as a minor.

\section*{Acknowledgements}
This work was supported by the Bergen Research Foundation and by the Research Council of Norway (Proj. no. 288761). The author thanks 
Michael Fellows, Fedor Fomin, Petr Golovach, Daniel Lokshtanov and Saket Saurabh 
for interesting discussions. The author also thanks anonymous reviewers for 
several useful comments and suggestions for improvement.

\bibliographystyle{alpha}
\bibliography{supergraphsSatisfyingCMSOProperties}

\appendix

\section{Proof of Theorem \ref{theorem:AutomatonMSO}} 
\label{ProofTheoremAutomatonMSO}

\newcommand{\boldS}{{\mathbf{S}}}
\newcommand{\newslicealphabet}{\mathbold{\Sigma}}
\newcommand{\vertexlabel}{\Gamma_1}
\newcommand{\edgelabel}{\Gamma_2}
\newcommand{\vertexlabeling}{\rho}
\newcommand{\edgelabeling}{\xi}
\newcommand{\interpretedAlphabet}{\newslicealphabet(c,\vertexlabel,\edgelabel,\mathcal{X})}
\newcommand{\interpretedAlphabetMinusX}{\newslicealphabet(c,\vertexlabel,\edgelabel,\mathcal{X}\backslash\{X\})}
\newcommand{\mso}{{$\mbox{MSO}$\;}}
\newcommand{\card}{\mathit{card}}
\newcommand{\interpretationbag}{I}
\newcommand{\interpretationdecomposition}{\mathcal{I}}
\newcommand{\interpretationgraphs}{\mathcal{J}}
\newcommand{\interpretationinduced}{\hat{\mathcal{I}}}
\newcommand{\image}{\mathrm{Im}}

\begin{retheorem}[\ref{theorem:AutomatonMSO}]
\label{retheorem:AutomatonMSO}
There exists a computable function $f:\N\times \N \rightarrow \N$ such that for each $\cmso$ sentence
$\varphi$, and each $t\in \N$, one can construct in time $f(|\varphi|,t)$ a tree-automaton 
$\automaton(\varphi,t)$ accepting the following tree language. 
\begin{equation}
\lang(\treeAutomaton(\varphi,t))  = \{\boldT\in \myterms(\concretebags(t))\;|\; \graph(\boldT) \models \varphi\}.
\end{equation}
\end{retheorem}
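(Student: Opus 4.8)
The plan is to realise $\graph(\boldT)$ as the image of the labelled tree underlying $\boldT$ under a fixed monadic second-order interpretation, and then to invoke the classical effective equivalence between CMSO over labelled binary trees and finite bottom-up tree automata. First I would attach to each $t$-concrete tree decomposition $\boldT\in\myterms(\concretebags(t))$ the relational structure $\mathrm{str}(\boldT)$ whose domain is $\positions(\boldT)$, equipped with the two child functions $p\mapsto p1$ and $p\mapsto p2$ and, for every concrete bag $\beta\in\concretebags(t)$, a unary predicate holding exactly at the positions $p$ with $\boldT[p]=\beta$. Since $\concretebags(t)$ is finite, this structure carries the same information as the term, and a bottom-up tree automaton over $\concretebags(t)$ reading $\boldT$ is the same object as an automaton reading $\mathrm{str}(\boldT)$.

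Next I would write down an MSO-interpretation $\Theta_t$ (more precisely, a $(t+1)$-copying MSO-transduction without parameters) that builds $\graph(\boldT)$ out of $\mathrm{str}(\boldT)$. The point is that all objects of $\graph(\boldT)$ are indexed by positions: the edge $e_p$ is identified with the position $p$ (those $p$ with $\boldT[p].\bagedge\neq\emptyset$), and the vertex $v_{s,P}$ is identified with the pair $(s,p)$, where $s\in[t]$ is a colour and $p$ is the unique \emph{root} of the $s$-maximal set $P$ — which is MSO-definable, since ``$p$ is the root of an $s$-maximal set'' just says $s\in\boldT[p].\bagset$ while the parent of $p$, if any, does not contain $s$ in its bag. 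Thus copy $0$ of the domain carries the edges and copies $1,\dots,t$ carry the vertices, which is why $t+1$ copies are needed. Incidence is MSO-definable too: $e_p$ is incident with the vertex represented by $(s,q)$ iff $s\in\boldT[p].\bagedge$ and $q$ is an ancestor of $p$ (or $q=p$) such that every position on the $q$-to-$p$ path has $s$ in its bag and $q$'s parent, if any, does not; correctness of $\Theta_t$ is then a direct reading of Definition~\ref{definition:GraphConcreteDecomposition}.

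Then I would push $\varphi$ backwards through $\Theta_t$ in the standard way. A graph-vertex quantifier becomes a quantifier over a position together with a colour in $[t]$; a vertex-set quantifier becomes a quantifier over a $t$-tuple $(Z_1,\dots,Z_t)$ of sets of root positions; an edge or edge-set quantifier becomes a quantifier over positions, resp.\ sets of positions, with nonempty $\bagedge$-component; the atomic incidence predicate is replaced by the formula above; and equality is handled colour by colour. The only point requiring a word is the modular counting predicate $\mathit{card}_{a,r}(Z)$ for a vertex-set $Z$, which under the encoding satisfies $|Z|=\sum_{s}|Z_s|$; one replaces $\mathit{card}_{a,r}(Z)$ by the finite disjunction, over all $(a_1,\dots,a_t)\in\{0,\dots,r-1\}^t$ with $\sum_s a_s\equiv a\pmod r$, of $\bigwedge_{s}\mathit{card}_{a_s,r}(Z_s)$ (for edge sets no split is needed), so modular counting is preserved. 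This yields, computably from $\varphi$ and $t$, a CMSO sentence $\widehat\varphi$ over labelled binary trees such that $\graph(\boldT)\models\varphi$ if and only if $\mathrm{str}(\boldT)\models\widehat\varphi$, for every $\boldT\in\myterms(\concretebags(t))$.

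Finally I would invoke the effective Thatcher--Wright/Doner theorem and its modular-counting extension \cite{CourcelleEngelfriet2012}: from $\widehat\varphi$ one constructs a bottom-up tree automaton $\automaton(\varphi,t)$ over $\concretebags(t)$ recognising $\{\boldT : \mathrm{str}(\boldT)\models\widehat\varphi\}$, which by the previous paragraph is exactly $\{\boldT\in\myterms(\concretebags(t)) : \graph(\boldT)\models\varphi\}$. Every step is computable, so the whole construction runs in time bounded by a computable $f(|\varphi|,t)$ (necessarily non-elementary already because of the last step, but that is all the statement claims). I expect the main obstacle to be the bookkeeping in setting up $\Theta_t$ correctly — in particular, making the representation of vertices by (colour, root-position) pairs consistent and verifying that the counting predicates survive the split into colours; the remainder is the standard automata-from-logic machinery.
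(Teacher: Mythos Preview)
Your proposal is correct, and it follows a route that is conceptually close to the paper's but organised differently. The paper proves the theorem by a single structural induction on $\varphi$: it enriches $t$-concrete bags with a local interpretation of the free variables, defines ``interpreted terms'' $\boldT^{\mathcal{I}}$ over an alphabet $\concretebags(t,\mathcal{X})$, and then builds, for each subformula $\psi$ with free variables $\mathcal{X}$, a tree automaton over $\concretebags(t,\mathcal{X})$ directly (atomic predicates by hand, Booleans via closure of automata, existential quantifiers via alphabet projection). Your approach instead factors the argument into two standard black boxes: first a backwards translation of $\varphi$ through an explicit $(t{+}1)$-copying MSO-transduction $\Theta_t$ realising $\graph(\cdot)$, producing a CMSO sentence $\widehat\varphi$ on labelled binary trees, and then a single invocation of Thatcher--Wright/Doner. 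The key representational idea---vertices as pairs (colour, root position of an $s$-maximal set), edges as positions with nonempty $\bagedge$, and modular vertex-set counting handled by splitting across colours---appears in both (the paper uses exactly the root-position count for $\mathit{card}_{a,r}$ on vertex sets). Your modular presentation cleanly separates the ``$\graph(\boldT)$ is MSO-definable inside $\boldT$'' step from the ``CMSO on trees $=$ tree automata'' step, at the price of invoking two external theorems; the paper's fused induction is more self-contained and makes the automaton construction explicit at every connective, at the price of redoing part of the standard machinery.
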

\begin{proof}

Let $S$ and $S'$ be sets and let $R\subseteq S\times S'$ be a relation. For each element 
$s\in S$, we let $\image(R,s) = \{s'\;:\; (s,s')\in R\}$ be the image of $s$ under $R$.

Let $\mathcal{X}$ be a set of first-order variables and second-order variables
and let $\agraph$ be a graph. An interpretation of $\mathcal{X}$ in $\agraph$ is a function 
$\interpretationgraphs:\mathcal{X}\rightarrow (V\cup E) \cup (2^{V}\cup 2^{E})$ that assigns a vertex 
$\interpretationgraphs(x)$ to each first-order vertex-variable $x$, an edge $\interpretationgraphs(y)$ to
each first-order edge-variable $y$, a set of vertices $\interpretationgraphs(X)$ 
to each second-order vertex-variable $X$, and a set of edges $\interpretationgraphs(Y)$ to each 
second-order edge-variable $Y$. The semantics of a formula $\varphi$ with free variables $\mathcal{X}$ being 
true on a graph $G$ under interpretation $\interpretationgraphs$ is the standard one. 

Let $\mathcal{X}$ be a set of free first-order and
second-order variables, and let $(\bagset,\bagedge)$ be a $t$-concrete bag.
An interpretation of $\mathcal{X}$ in $(\bagset,\bagedge)$ 
is a relation $\interpretationbag \subseteq \mathcal{X}\times (\bagset \cup \{\bagedge\})$ 
such that the following conditions are satisfied: $\image(\interpretationbag,x) \subseteq \bagset$ and $|\image(\interpretationbag,x)|\leq 1$ for each
first-order vertex-variable $x$; $\image(\interpretationbag,X)\subseteq \bagset$ for each second-order vertex-variable $X$; 
$\image(\interpretationbag,y)\subseteq \{\bagedge\}$ for each first-order edge-variable $y$; and 
$\image(\interpretationbag,Y)\subseteq \{\bagedge\}$ for each second-order variable $Y$. 
We let $\concretebags(t,\mathcal{X})$ be the set of all triples of the form $(\bagset,\bagedge,\interpretationbag)$ where 
$(\bagset,\bagedge)$ is a $t$-concrete bag and $\interpretationbag$ is an interpretation of $\mathcal{X}$ in $(\bagset,\bagedge)$. 

If $\boldT$ is a $t$-concrete decomposition in $\myterms(\concretebags(t))$, then an interpretation of
$\mathcal{X}$ in $\boldT$ is a function $\interpretationdecomposition:\positions(\boldT)\rightarrow \concretebags(t,\mathcal{X})$
where for each position $p\in \positions(\boldT)$, $\interpretationdecomposition(p)$ is an interpretation of 
$\mathcal{X}$ in the $t$-concrete bag $\boldT[p]$, and for each first-order vertex-variable $x$ (edge-variable $y$), 
there is at most one position $p\in \positions(\boldT)$ such that $|\image(\interpretationbag,x)|=1$ ($|\image(\interpretationbag,y)|=1$).
If $\interpretationdecomposition$ is an interpretation of $\mathcal{X}$ in $\boldT$, then the
interpretation of $\mathcal{X}$ in $\graph(\boldT)$ induced by $\interpretationdecomposition$ is
the function 
$$\interpretationinduced:\mathcal{X}\rightarrow (V_{\graph(\boldT)}\cup E_{\graph(\boldT)}) \cup (2^{V_{\graph(\boldT)}} \cup 2^{E_{\graph(\boldT)}})$$
defined as follows.  
\begin{enumerate}
	\item For each $s\in [t]$, each $s$-maximal component $P\subseteq \positions(\boldT)$, and each 
		first-order or second-order vertex-variable $\mathbf{x}$, the vertex 
		$v_{s,P}$ belongs to $\image(\interpretationinduced,\mathbf{x})$ if and only if there exists
		a position $p\in P$ such that $(\mathbf{x},s)\in \interpretationdecomposition(p)$. 
	\item For each $p\in \positions(\boldT)$ such that $\boldT[p].\bagedge\neq \emptyset$, and each first-order or second-order edge-variable
		$\mathbf{y}$, the edge $e_{p}$ belongs to $\image(\interpretationinduced,\mathbf{y})$ if and only if 
		$(\mathbf{y},\boldT[p].\bagedge) \in \interpretationdecomposition(p)$. 
\end{enumerate}

If $\boldT$ is a $t$-concrete decomposition in $\myterms(\concretebags(t))$ and $\interpretationdecomposition$ is an 
interpretation of $\mathcal{X}$ in $\boldT$, then we write $\boldT^{\mathcal{I}}$ to denote the term in 
$\myterms(\concretebags(t,\mathcal{X}))$ where $\boldT^{\mathcal{I}}[p] = (\boldT[p],\mathcal{I}(p))$ for each position $p\in \positions(\boldT)$.
We say that $\boldT^{\mathcal{I}}$ is an interpreted term. We note that one can straightforwardly construct a 
tree automaton $\treeAutomaton(t,\mathcal{X})$ over the alphabet $\concretebags(t,\mathcal{X})$ that accepts precisely the
interpreted terms in $\myterms(\concretebags(t,\mathcal{X}))$.  

For each \msotwo formula $\psi$ with free variables $\mathcal{X}$ 
we will construct a tree-automaton $\treeAutomaton(\psi,t)$ 
over the alphabet $\concretebags(t,\mathcal{X})$ whose language
$\lang(\treeAutomaton(\psi,t))$ consists of all interpreted terms 
$\boldT^{\interpretationdecomposition}\in \myterms(\concretebags(t,\mathcal{X}))$
such that $\graph(\boldT) \models \psi$ under the interpretation $\interpretationinduced$ of 
$\mathcal{X}$ in $\graph(\boldT)$ induced by~$\interpretationdecomposition$. 
The tree-automaton $\treeAutomaton(\psi,t)$ is constructed by induction on the structure of the formula~$\psi$. 

\paragraph{Base Case}
In the base case, the formula $\psi$ is an atomic predicate. There are five cases to be considered. 
Below, we describe the behavior of the tree-automaton $\treeAutomaton(\psi,t)$ in each of these five 
cases. The proper specification of the set of states and set of transitions of each of the tree-automata 
described below is straightforward.
\begin{enumerate}
	\item  If $\psi \equiv (\mathbf{z}_1 = \mathbf{z}_2)$ where $\mathbf{z}_1$ and $\mathbf{z}_2$ are both vertex-variables, 
		both edge-variables, both vertex-set variables or both edge-set variables, then $\treeAutomaton(\psi,t)$ accepts
		a term $\boldT^{\interpretationdecomposition}$ if and only if $\boldT^{\interpretationdecomposition}$ is an 
		interpreted term, and for each position $p\in \positions(\boldT)$, and each $s\in \boldT[p].\bagset$, 
		$(\mathbf{z}_1,s)\in \interpretationdecomposition(p) \Leftrightarrow (\mathbf{z}_2,s)\in \interpretationdecomposition(p)$.
	\item If $\psi \equiv z\in Z$ where either $z$ is a vertex-variable and $Z$ is a vertex-set variable, or $z$ is an edge-variable
		and $Z$ is an edge-set variable, then $\treeAutomaton(\psi,t)$ accepts a term $\boldT^{\interpretationdecomposition}$ if and only if 
		$\boldT^{\interpretationdecomposition}$ is an interpreted term and for each position $p\in \positions(\boldT)$, 
		and each element $r\in \boldT[p].\bagset \cup \{\boldT[p].\bagedge\}$, $(z,r)\in \interpretationdecomposition(p)\Leftrightarrow 
		(Z,r)\in \interpretationdecomposition(p)$.
	\item If $\psi \equiv \incidence(y,x)$ where $y$ is an edge variable and $x$ is a vertex variable, then the automaton 
		$\treeAutomaton(\psi,t)$ accepts a term $\boldT^{\interpretationdecomposition}$ if and only if
		 $\boldT^{\interpretationdecomposition}$ is an interpreted term and there 
		is some position $p\in \positions(\boldT)$ and some $s\in \boldT[p].\bagset$ such that $(x,s)\in \interpretationdecomposition(p)$,
		$s\in \boldT[p].b$, and $(y,\boldT[p],b)\in \interpretationdecomposition(p)$.
	\item If $\psi \equiv \mathit{card}_{a,r}(Z)$ where $0\leq a < r$, $r\geq 2$, and $Z$ is an edge-set variable,
		then the tree automaton $\treeAutomaton(\psi,t)$ accepts $\boldT^{\interpretationdecomposition}$ if 
		and only if $\boldT^{\interpretationdecomposition}$ is an interpreted term, and the number of positions 
		$p\in \positions(\boldT)$ such that $\boldT[p].b\neq \emptyset$ and $(Z,\boldT[p].b)\in \interpretationdecomposition(p)$ 
		is equal to $a \mod r$. 
	\item If $\psi \equiv \mathit{card}_{a,r}(Z)$ where $0\leq a < r$, $r\geq 2$, and $Z$ is a vertex-set variable 
		then the tree automaton $\treeAutomaton(\psi,t)$ accepts $\boldT^{\interpretationdecomposition}$ if and only if 
		$\boldT^{\interpretationdecomposition}$ is an interpreted term and the number of pairs of the form $(s,p)\in [t]\times \positions(\boldT)$
		such that $s\in \boldT[p].B$ and $p$ is the root of $P_{s,p}$ is equal to $a \mod r$. 
\end{enumerate}

\paragraph{Disjunction, conjunction and negation}
The three boolean operations $\vee,\wedge,\neg$ are dealt with using the fact that tree-automata are effectively 
closed under union, intersection and complement (Lemma \ref{lemma:PropertiesOfTreeAutomata}).
Below, we let $\treeAutomaton(t,\mathcal{X})$ be the tree automaton generating the set of interpreted terms over 
$\concretebags(t,\mathcal{X})$. 

\begin{equation}
\begin{array}{c}
\treeAutomaton(\psi \vee \psi',t) = \treeAutomaton(\psi ,t) \cup \treeAutomaton(\psi',t) \\
\\
\treeAutomaton(\psi \wedge \psi',t) = \treeAutomaton(\psi ,t) \cap \treeAutomaton(\psi', t) \\
\\
\treeAutomaton(\neg \psi,t) = \overline{\treeAutomaton(\psi ,t)} \cap 
	\treeAutomaton(t,\mathcal{X})\\
\end{array}
\end{equation}

Observe that in the definition of $\treeAutomaton(\neg \psi,t)$, the 
intersection with the tree-automaton $\treeAutomaton(t,\mathcal{X})$ guarantees 
that all terms in $\lang(\treeAutomaton(\neg \psi,t))$  are interpreted. 

\paragraph{Existential Quantification}
Let $\interpretationbag$ be an interpretation of $\mathcal{X}$ in $(\bagset,\bagedge)$, and 
let $Z$ be either a first-order or a second-order variable in $\mathcal{X}$. We let 
$\interpretationbag - Z = \interpretationbag \cap \left[ (\mathcal{X}\backslash Z) \times (\bagset\cup \{\bagedge\})\right]$ 
be the relation obtained from $\interpretationbag$ by deleting all pairs of the form $(Z,r)$ for $r\in \bagset\cup \{\bagedge\}$. 
To eliminate existential quantifiers we proceed as follows: 
For each variable $Z\in \mathcal{X}$, we let $\mathit{Proj}_{Z}:\concretebags(t,\mathcal{X}) \rightarrow \concretebags(t,\mathcal{X}\backslash Z)$ 
be the map that sends each interpreted $t$-concrete bag $(\bagset,\bagedge,\interpretationbag) \in \concretebags(t,\mathcal{X})$
to the interpreted $t$-concrete bag $(\bagset,\bagedge, \interpretationbag- Z) \in \concretebags(t,\mathcal{X}\backslash Z)$. 
Subsequently, we extend $\mathit{Proj}_{Z}$ homomorphically to terms by setting $\mathit{Proj}_Z(\boldT)[p] = \mathit{Proj}_{Z}(\boldT[p])$ 
for each position $p$ in $\positions(\boldT)$. Finally, we extend $\mathit{Proj}_Z$ to tree languages over $\concretebags(t,\mathcal{X})$
by applying  this map to each term of the language. 
Then we set $$\treeAutomaton(\exists Z \psi(Z),t) = \mathit{Proj}_{Z}(\treeAutomaton(\psi(Z), t)).$$

We note that if $\psi$ is a sentence, i.e., a formula without free variables, then by the end of this 
inductive process all variables occurring in $\psi$ will have been eliminated. 
In this way, the language $\lang(\treeAutomaton(\psi,t))$ will 
consist precisely of the interpreted decompositions $\boldT^{\varepsilon}\in \myterms(\concretebags(t,\emptyset))$ 
where $\varepsilon$ is the empty interpretation, that assigns $\varepsilon(p)= \emptyset$ for each $p\in \positions(\boldT)$,
and $\boldT \models \psi$. Now consider the map that $\mathfrak{m}:\concretebags(t,\emptyset) \rightarrow \concretebags(t)$
that sends each triple $(B,b,\emptyset)\in \concretebags(t,\emptyset)$ to the $t$-concrete bag $(B,b)$. Then, 
by setting $\treeAutomaton(\psi,t) \leftarrow \mathfrak{m}(\treeAutomaton(\psi,t))$, we have that 
$\treeAutomaton(\psi,t)$ accepts a term $\boldT\in \concretebags(t)$ if and only if $\boldT$ is a $t$-concrete tree decomposition
such that $\boldT\models \varphi$. 
\end{proof}

\end{document}